\DeclareMathOperator{\sgn}{sgn}
\newtheorem{theorem}{Theorem}[section]
\newtheorem{pro}[theorem]{Proposition}
\newtheorem*{conj*}{Conjecture}
\newtheorem{remark}[theorem]{Remark}
\theoremstyle{definition}
\newtheorem{example}[theorem]{Example}
\theoremstyle{remark}
\numberwithin{equation}{section}
\newcommand{\opn}[1]{\operatorname{#1}}
\newcommand{\bs}[1]{\boldsymbol{#1}}
\def\eps{\varepsilon}
\def\>{\rangle}
\def\<{\langle}
\def\sgn{\opn{sgn}}
\def\0{\bs{0}}
\def\1{\mathbbm{1}}
  \def\XXint#1#2#3{{\setbox0=\hbox{$#1{#2#3}{\int}$}
      \vcenter{\hbox{$#2#3$}}\kern-.47\wd0}}
\begin{document}
\pagestyle{plain}

\def\beq{\begin{equation}}
\def\eeq{\end{equation}}
\def\eps{\epsilon}
\def\laa{\langle}
\def\raa{\rangle}
\def\qed{\begin{flushright} $\square$ \end{flushright}}
\def\qee{\begin{flushright} $\Diamond$ \end{flushright}}
\def\ov{\overline}
\def\bma{\begin{bmatrix}}
\def\ema{\end{bmatrix}}

\def\ora{\overrightarrow}

\def\bma{\begin{bmatrix}}
\def\ema{\end{bmatrix}}
\def\bex{\begin{example}}
\def\eex{\end{example}}
\def\beq{\begin{equation}}
\def\eeq{\end{equation}}
\def\eps{\epsilon}
\def\laa{\langle}
\def\raa{\rangle}
\def\qed{\begin{flushright} $\square$ \end{flushright}}
\def\qee{\begin{flushright} $\Diamond$ \end{flushright}}
\def\ov{\overline}

\author{Manuel D. de la Iglesia}
\address{Manuel D. de la Iglesia, Universidad de Alcal\'a, Departamento de F\'isica y Matem\'aticas, Campus universitario, E-28805 Alcal\'a de Henares (Madrid), Spain.}
\email{manuel.dominguezi@uah.es}

\author{Carlos F. Lardizabal}
\address{Carlos F. Lardizabal, Instituto de Matem\'atica e Estat\'istica, Universidade Federal do Rio Grande do Sul. Porto Alegre, RS  91509-900 Brazil.}
\email{cfelipe@mat.ufrgs.br}

\thanks{The work of the first author is supported by the Spanish Agencia Estatal de Investigaci\'on research project [PID2024-155133NB-I00], entitled \emph{Ortogonalidad, aproximaci\'on e integrabilidad: aplicaciones en procesos estoc\'asticos cl\'asicos y cu\'anticos}.}

\date{\today}


\title{Exact solutions of open quantum Brownian motions \\on the real line for two-level systems}

\begin{abstract}
We investigate open quantum Brownian motions as quantum analogues of classical diffusion processes under interaction with an external enviroment. Building upon the microscopic derivation by Sinayskiy and Petruccione \cite{sp15}, we revisit the associated master equation and study its formulation as a generalized parabolic system. Employing Fourier transform methods, we derive exact analytical solutions for one-dimensional evolutions of particles with two-level internal degree of freedom. 
\end{abstract}

\maketitle


{\bf Keywords:} quantum Brownian motion; open quantum diffusion; open quantum system; Green function; master equation; parabolic system.

\medskip

\section{Introduction}

Random walks \cite{durrett,grimmett} are pervasive in our understanding of physical phenomena, with extensive applications ranging from biology \cite{berg} and chemistry \cite{vkampen} to economics \cite{malkiel} and computer science \cite{motwani}. In recent years, quantum generalizations of random walks have been actively explored, particularly due to their potential applications in quantum computation \cite{nielsen,portugal,salvador}, information theory and cryptography \cite{schwest,vla}. These developments have also brought renewed interest in models that capture the dynamics of quantum systems interacting with their environment, a domain known as open quantum dynamics \cite{benatti,BP}.

\medskip

A class of examples of interest within this context consist of the open quantum Brownian motions (OQBMs), which serves as a quantum analogue of the classical diffusion process under the influence of an external environment. This model not only provides insight into fundamental questions of decoherence and dissipation, but also offers a rich mathematical structure that connects quantum theory, probability, and functional analysis. In \cite{BBT2} a detailed description of OQBM is described in terms of open quantum walks \cite{attal} and associated continuous time limits, also see \cite{BBT1}.  In \cite{sp15}, Petruccione and Sinayskiy have provided a derivation of an OQBM for a free quantum Brownian particle with two
degrees of freedom, with the corresponding master equation for the reduced density $\rho_s(t)$ obtained in terms of a Born-Markov approximation, namely,
\beq\label{eq1sp}
\frac{d}{dt}\rho_s(t)=-\int_0^\infty \mathrm{Tr}_B\Bigg(H_{SB}(t),[H_{SB}(t-\tau),\rho_s(t)\otimes\rho_B]\Bigg)\;d\tau,\eeq
where $H_{SB}$ denote the system-bath interaction hamiltonian. In such work, numerical integrations for particular cases are studied, but a general analytical solution was not pursued. In the present article, we are motivated by this latter work in order to derive exact analytical solutions for systems on the line using Fourier transform methods. The basic setting is obtained from noting that the differential equations derived from (\ref{eq1sp}) can be written as a generalized parabolic system \cite{volpert},
\begin{equation*}\
\frac{\partial}{\partial t} \vec{u}(t,x) = 
2\gamma_p \, \frac{\partial^2}{\partial x^2} \vec{u}(t,x)
+ B\frac{\partial}{\partial x} \vec{u}(t,x)+ C \, \vec{u}(t,x),\quad (t,x) \in [0,\infty) \times \mathbb{R},
\end{equation*}
for certain choices of matrices $B$ and $C$. We review the deduction of the master equation appearing in \cite{sp15}, and the corresponding details leading to the analytical solutions are presented in the following sections.

\medskip

We remark that in the present work we will focus on 1-dimensional evolutions, namely, motions on the real line with the corresponding differential equations describing an initial value problem, and we consider a quantum particle described by  a point in the Bloch sphere \cite{nielsen} (i.e., an order 2 density matrix corresponding to a particle with two-level internal degree of freedom) together with its location on the line. It is our intention, in a future publication, to consider boundary value problems, so that we are able to describe motions on the half-integer line and on finite intervals as well.

\medskip

The contents of this work are as follows. In Section \ref{sec2} we make a brief review of the microscopic derivation of OQBM as presented in \cite{sp15} and we provide basic definitions and settings for subsequent calculations. In Section \ref{sec3} the general approach is presented, noting that further specific expressions can be obtained for certain choices of parameters. These appear in Sections \ref{sec4}, \ref{sec5} and \ref{sec6}. For convenience of the reader, a brief Appendix revises the Born-Markov approximation and other preliminaries.

\bigskip

\section{Brief review: microscopic derivation of the OQBM}\label{sec2}

This description follows \cite{sp15} and, for convenience, provides a brief revision on the derivation of the master equation for the OQBM. In Subsection \ref{rewrite}, we write the specific system of equations to be studied in this work. The Hamiltonian of the quantum Brownian particle is defined to be
$$
H_S=\frac{P^2}{2M}+\frac{\omega_0}{2}\sigma_z+\Omega\sigma_x,$$
where the first term is the Hamiltonian of the free Brownian particle, the second term is the Hamiltonian of the free two-level system, and the last term $\Omega\sigma_x$ describes a {\bf weak classical driving of the system}, where $\Omega\ll \omega_0$. Here, as usual, $\sigma_x$ and $\sigma_z$ denote the standard Pauli matrices.

\medskip

The environment Hamiltonian is given by
$$H_B=\sum_n\omega_na_n^*a_n,$$
where $a_n^*$ and $a_n$ are bosonic creation and annihilation operator with standard commutation relations $([a_n,a_m^*]=\delta_{m,n})$ and $\omega_n$ are the frequencies of the corresponding oscillators. The interaction Hamiltonian is
$$H_{SB}=\sum_n iP\xi_n \alpha\frac{a_n^*-a_n}{\sqrt{2}}+\beta\sigma_z\xi_n\frac{a_n^*+a_n}{\sqrt{2}},$$
where the real constants $\alpha$ and $\beta$ describe the strength of the decoherence in external and internal degrees of freedom and the coefficients $\xi_n$ describe the strength of the coupling to the environment. In \cite{sp15}, it is shown that
$$H_{SB}(t)=J^*B(t)+JB^*(t),$$
where
$$J=i\alpha P+\beta\sigma_z,\;\;\;B(t)=\sum_n \frac{\xi_n}{\sqrt{2}}a_ne^{-i\omega_n t}$$
and from this one can obtain an equation for $\rho_s$, namely
\beq\label{be1}\frac{d}{dt}\rho_s(t)=\gamma(0_-)\mathcal{D}(J)\rho_s(t)+\gamma(0_+)\mathcal{D}(J^*)\rho_s(t),\eeq
where
$$\mathcal{D}(M)\rho=M\rho M^*-\frac{1}{2}\{M^*M,\rho\}_+,\;\;\;\gamma(\omega)=\int_{-\infty}^{+\infty}e^{i\omega s}\mathrm{Tr}_B[B^*(s)B(0)]\;ds.$$
Regarding practical calculations, {the amplitude $\Omega$ of the weak external driving term is of the same order as the decoherence coefficients appearing in equation (\ref{be1}) for $\rho_s$, that is, $\Omega \sim \gamma(0_{\pm})\ll \omega_0$}.

\medskip

Finally, in order to include the effect of external driving into the master equation (\ref{be1}) we rotate such equation with the unitary operator
$$U_\Omega=\exp[-it\Omega\sigma_x],$$
so that we obtain
\begin{equation}\label{eq:122}
\frac{d}{dt}\rho_s(t)=-i[\Omega\sigma_x,\rho_s]+\gamma_p\mathcal{D}(P)\rho_s+\gamma_z\mathcal{D}(\sigma_z)\rho_s+i\Delta(P\rho_s\sigma_z-\sigma_z\rho_s P),
\end{equation}
where
$$\gamma_p=\alpha^2(\gamma(0_+)+\gamma(0_-)),\;\;\;\gamma_z=\beta^2(\gamma(0_+)+\gamma(0_-)),\;\;\;\Delta=\alpha\beta(\gamma(0_-)-\gamma(0_+)).$$

For a generic OQW the density matrix of the quantum walker has a diagonal form in the position space
\begin{equation*}
    \rho(t) = \sum_n \rho_n \otimes |n\rangle\langle n|.
\end{equation*}
Accordingly, in the OQBM case the density matrix will be given by \cite{BBT1},
\begin{equation}\label{eq:12}
    \rho_s(t) = \int_{-\infty}^{\infty} dx \, \rho(t,x) \otimes |x\rangle\langle x|,
\end{equation}
with $P(t,x) = \operatorname{Tr}[\rho(t,x)]$ the probability density of finding the system at position $x$ at time $t$, where the trace is taken over the internal degree of freedom of the open quantum Brownian particle. By direct substitution of the density matrix (\ref{eq:12}) into the master equation \eqref{eq:122} we obtain the following equation,
\begin{equation}\label{eq:13}
\begin{split}
    \frac{\partial}{\partial t}\rho(t,x) = & 2\gamma_p \frac{\partial^2}{\partial x^2} \rho(t,x)- i[\Omega\sigma_x, \rho(t,x)]+ \gamma_z(\sigma_z \rho(t,x) \sigma_z - \rho(t,x))\\
    &\qquad- \Delta \left( \sigma_z \frac{\partial \rho(t,x)}{\partial x} + \frac{\partial \rho(t,x)}{\partial x} \sigma_z \right).
    \end{split}
\end{equation}
The above master equation (\ref{eq:13}) describes OQBM. This master equation has the same structure as the master equation introduced by Bauer et al (\cite[equation (2)]{BBT1} and \cite[equation (28)]{BBT2}). The propagation of the Brownian particle is described by the diffusive term $2\gamma_p \frac{\partial^2 \rho(t,x)}{\partial x^2}$. The dissipative dynamics of the internal degree of freedom of the Brownian particle is described by the Lindblad term $-i[\Omega\sigma_x, \rho(t,x)] + \gamma_c(\sigma_z \rho(t,x) \sigma_z - \rho(t,x))$. The last term $-\Delta\sigma_z\frac{\partial \rho(t,x)}{\partial x} - \Delta\frac{\partial \rho(t,x)}{\partial x}\sigma_z$ of the master equation (\ref{eq:13}) is a ``decision making'' term and describes the environment mediated interaction between the external and internal degrees of freedom of the quantum Brownian particle. The presence of this term makes the quantum Brownian motion ``open'' and this term plays the role of a ``quantum coin'', which affects the direction of the propagation of the Brownian particle.

\medskip

For a density matrix of the form 
$$
\rho(t,x)=\begin{pmatrix}\rho_{11}(t,x)&\rho_{12}(t,x)\\
\ov{\rho_{12}(t,x)}&\rho_{22}(t,x)
\end{pmatrix},
$$
where $t\geq0,x\in\mathbb{R}$, the system of coupled linear partial differential equations given by \eqref{eq:13} (see \cite{sp15}), which is an instance of a diffusion-advection-reaction system \cite{DAR2,DAR}, can be written as

\begin{equation}\label{e1}
\begin{split}
\frac{\partial}{\partial t}\rho_{11}(t,x)&=2\gamma_p\frac{\partial^2}{\partial x^2}\rho_{11}(t,x)-i\Omega\left[\rho_{21}(t,x)-\rho_{12}(t,x)\right]-2\Delta\frac{\partial}{\partial x}\rho_{11}(t,x),\\
\frac{\partial}{\partial t}\rho_{12}(t,x)&=2\gamma_p\frac{\partial^2}{\partial x^2}\rho_{12}(t,x)-i\Omega\left[\rho_{22}(t,x)-\rho_{11}(t,x)\right]-2\gamma_z\rho_{12}(t,x),\\
\frac{\partial}{\partial t}\rho_{21}(t,x)&=2\gamma_p\frac{\partial^2}{\partial x^2}\rho_{21}(t,x)-i\Omega\left[\rho_{11}(t,x)-\rho_{22}(t,x)\right]-2\gamma_z\rho_{21}(t,x),\\
\frac{\partial}{\partial t}\rho_{22}(t,x)&=2\gamma_p\frac{\partial^2}{\partial x^2}\rho_{22}(t,x)-i\Omega\left[\rho_{12}(t,x)-\rho_{21}(t,x)\right]+2\Delta\frac{\partial}{\partial x}\rho_{22}(t,x),
\end{split}
\end{equation}
where $\gamma_p, \gamma_z, \Delta, \Omega$ are positive constants. Note that the equation for $\rho_{21}(t,x)=\ov{\rho_{12}(t,x)}$ is just the conjugate of the second equation. Assume also that an initial condition is fixed, given by some density matrix 
\begin{equation}\label{iccs}
\rho(0,x)=\begin{pmatrix}\psi_{11}(x)&\psi_{12}(x)\\
\ov{\psi_{12}(x)}&\psi_{22}(x)
\end{pmatrix},\quad x\in\mathbb{R}.
\end{equation}

%
%

\subsection{Rewriting the master equation}\label{rewrite} 

In \cite{sp15}, the authors also consider the system of equations (\ref{e1}) in terms of $\rho_{\pm}(t,x)=\rho_{11}(t,x)\pm\rho_{22}(t,x)$, $C_R(t,x)=\Re(\rho_{12}(t,x))$ and $C_I(t,x)=\Im(\rho_{12}(t,x))$. Therefore, we may write
\begin{equation}\label{e2}
\begin{split}
\frac{\partial}{\partial t}\rho_+(t,x)&=2\gamma_p\frac{\partial^2}{\partial x^2}\rho_+(t,x)-2\Delta\frac{\partial}{\partial x}\rho_-(t,x),\\
\frac{\partial}{\partial t}C_R(t,x)&=2\gamma_p\frac{\partial^2}{\partial x^2}C_R(t,x)-2\gamma_zC_R(t,x),\\
\frac{\partial}{\partial t}C_I(t,x)&=2\gamma_p\frac{\partial^2}{\partial x^2}C_I(t,x)-2\gamma_z C_I(t,x)+\Omega\rho_-(t,x),\\
\frac{\partial}{\partial t}\rho_-(t,x)&=2\gamma_p\frac{\partial^2}{\partial x^2}\rho_-(t,x)-2\Delta\frac{\partial}{\partial x}\rho_+(t,x)-4\Omega C_I(t,x).\\
\end{split}
\end{equation}
The initial conditions are changed accordingly, so that we have
\begin{equation}\label{iccs2}
\rho_\pm(0,x)=\psi_{11}(x)\pm\psi_{22}(x),\quad C_R(0,x)=\Re(\psi_{12}(x)),\quad C_I(0,x)=\Im(\psi_{12}(x)).
\end{equation}
While the authors of \cite{sp15} explored this system through numerical integration for particular cases, a general analytical solution was not pursued. In this work, we address this by deriving exact analytical solutions for the system \eqref{e2} with initial conditions \eqref{iccs2} using Fourier transform methods.

\medskip

We note that the quantities $\rho_{\pm}(t,x)=\rho_{11}(t,x)\pm\rho_{22}(t,x)$, $C_R(t,x)=\Re(\rho_{12}(t,x))$ and $C_I(t,x)=\Im(\rho_{12}(t,x))$ are closely related to the usual Bloch vector correspondence for one particle with two-level internal degree of freedom, namely,
$$
\rho(t,x)=\begin{pmatrix} \rho_{11}(t,x) & \rho_{12}(t,x) \\ \ov{\rho_{12}}(t,x) & \rho_{22}(t,x)
\end{pmatrix}\;\Longleftrightarrow\; \vec{r}(t,x)=\left(2\Re \left(\rho_{12}(t,x)\right),-2\Im\left(\rho_{12}(t,x)\right),\rho_{11}(t,x)-\rho_{22}(t,x)\right).
$$
In physical terms,  $\rho_-(t,x)$ corresponds to a {\bf population imbalance} and equals $\langle\sigma_z\rangle_x$, the local expectation value of the Pauli $\sigma_z$ operator at position $x$. Also, $C_R(t,x)$ equals $\tfrac{1}{2}\,\langle\sigma_x\rangle_x$, measuring \textbf{phase coherence} along the $x$-axis of the Bloch sphere, and $C_I(t,x)$ equals $-\tfrac{1}{2}\,\langle\sigma_y\rangle_x$, measuring phase coherence along the $y$-axis.

\medskip

Regarding the solution of the above system, first we observe that the second equation in \eqref{e2} is uncoupled and can be solved explicitly. Indeed, this is a time modification of the heat equation, whose Green's function is
\[
g(t,x) = \frac{e^{-2 \gamma_z t}}{\sqrt{8 \pi \gamma_p t}}
\exp\left(-\frac{x^2}{8 \gamma_p t}\right), \quad t>0,\quad x\in\mathbb{R}.
\]
Hence the solution for general initial data $C_R(0,x)$ is the convolution with \(G\):
\[
C_R(t,x) = \int_{-\infty}^{\infty} g(t,x-y)C_R(0,y)dy
=\frac{e^{-2 \gamma_z t}}{\sqrt{8 \pi \gamma_p t}}
\int_{-\infty}^{\infty} \exp\left(-\frac{(x-y)^2}{8 \gamma_p t}\right)\Re(\psi_{12}(y))dy.
\]
With one equation removed from the system \eqref{e2}, the remaining three can be expressed in matrix form as the following generalized parabolic system:
\begin{equation}\label{GPS}
\frac{\partial}{\partial t} \vec{u}(t,x) = 
2\gamma_p \, \frac{\partial^2}{\partial x^2} \vec{u}(t,x)
+ B\frac{\partial}{\partial x} \vec{u}(t,x)+ C \, \vec{u}(t,x),\quad (t,x) \in [0,\infty) \times \mathbb{R},
\end{equation}
where $\vec{u}(t,x) =\left(\rho_+(t,x),C_I(t,x),\rho_-(t,x)
\right)^T$ and
\begin{equation}\label{BBCC}
B=\begin{pmatrix}
0 & 0 & -2\Delta \\
0 & 0 & 0 \\
-2\Delta & 0 & 0
\end{pmatrix},\quad
C =
\begin{pmatrix}
0 & 0 & 0 \\
0 & -2\gamma_z & \Omega \\
0 & -4\Omega & 0
\end{pmatrix}.
\end{equation}
The initial conditions (see \eqref{iccs2}) are given by
\begin{equation}\label{iccs3}
\vec{u}_0(x)=\vec{u}(0,x)=\left(\psi_{11}(x)+\psi_{22}(x),\Im(\psi_{12}(x)),\psi_{11}(x)-\psi_{22}(x)\right)^T.
\end{equation}

\medskip

\section{General approach}\label{sec3}

The generalized parabolic system \eqref{GPS} is solvable for any matrices $B, C \in \mathbb{C}^{m \times m}$, provided that the components of the initial condition $\vec{u}_0(x)$ are $L^1(\mathbb{R})$ functions. Indeed, take the Fourier transform in $x$:
\[
\widehat{\vec{u}}(t,\xi) = \int_{\mathbb{R}} \vec{u}(t,x) e^{-i\xi x} dx.
\]
Then derivatives become multiplications: 
\(\partial_x \mapsto i\xi, \; \partial_x^2 \mapsto -\xi^2\), 
and \eqref{GPS} becomes an ODE in $\xi$:
\[
\frac{\partial}{\partial t} \widehat{\vec{u}}(t,\xi) = Q(\xi) \, \widehat{\vec{u}}(t,\xi), 
\]
where
\begin{equation*}\label{eq:Qz}
Q(\xi) = -2\gamma_p \xi^2 I_m + i \xi B + C,
\end{equation*}
The solution in Fourier space is
\[
\widehat{\vec{u}}(t,\xi) = \exp(t Q(\xi)) \, \widehat{\vec{u}}_0(\xi).
\]
Applying the inverse Fourier transform gives the formal solution
\begin{equation}\label{eq:solPDEg}
\vec{u}(t,x) = \frac{1}{2\pi} \int_{\mathbb{R}} e^{i \xi x} \, \exp(t Q(\xi)) \, \widehat{\vec{u}}_0(\xi) \, d\xi
= \int_{\mathbb{R}} G(t,x-y) \, \vec{u}_0(y) \, dy,
\end{equation}
where $G(t,x)$ is the matrix-valued Green's function
\begin{equation}\label{genGreen}
G(t,x)=\frac{1}{2\pi}\int_{\mathbb{R}}\exp\left(t Q(\xi)\right) e^{i \xi x}d\xi.
\end{equation}
This solution satisfies \eqref{GPS} because differentiation in $x$ corresponds to multiplication by $i\xi$ in Fourier space. Computing the Green's function $G(t,x)$ is difficult because the matrix exponential $\exp(tQ(\xi))$ is not straightforward to find. In our case, the matrices $B$ and $C$ generally do not commute (except when $\Omega=0$ or $\Delta=0$), which complicates things.

\smallskip

If we assume that $Q(\xi)$ admits a diagonal decomposition of the form
\begin{equation*}\label{eq:diagonalization}
Q(\xi) = U(\xi) \Lambda(\xi) U^{-1}(\xi),
\end{equation*}
with
\[
\Lambda(\xi) = \operatorname{diag}(\lambda_1(\xi), \ldots, \lambda_m(\xi)), \quad \text{and} \quad \Re(\lambda_k(\xi)) \leq 0 \ \text{for all } k = 1,\ldots,m,
\]
then, a formal solution can be written as \eqref{eq:solPDEg} where $G(t,x)$ is the matrix-valued Green's function for the problem \eqref{GPS} and can be defined formally as
\begin{equation}\label{genGreenF}
G(t,x)=\frac{1}{2\pi}\int_{\mathbb{R}}U(\xi) \exp\left(t \Lambda(\xi)\right)U^{-1}(\xi) e^{i \xi x}d\xi.
\end{equation}
In our setting, using matrices $B$ and $C$ in \eqref{BBCC}, we have that
\begin{equation}\label{QQQ}
Q(\xi)=\begin{pmatrix}
    -2\gamma_p\xi^2 & 0 & -2i\xi\Delta \\
    0 & -2\gamma_p\xi^2 - 2\gamma_z & \Omega \\
    -2i\xi\Delta & -4\Omega & -2\gamma_p\xi^2
\end{pmatrix}.
\end{equation}
The eigenvalues of $Q(\xi)$ can be computed explicitly. Define the functions:
\begin{align*}
p(\xi) &= 12 \Delta^6 \xi^6 + 12 \Delta^4 (3\Omega^2 + 2\gamma_z^2) \xi^4 + 12 \Delta^2 (3\Omega^4 - 5\Omega^2 \gamma_z^2 + \gamma_z^4) \xi^2+ 3 \Omega^4 (4\Omega^2 - \gamma_z^2), \\
q(\xi) &= 4 \gamma_z (-18 \Delta^2 \xi^2 + 9 \Omega^2 - 2 \gamma_z^2), \\
r(\xi) &= 4 \Delta^2 \xi^2 + 4 \Omega^2 - \tfrac{4}{3} \gamma_z^2.
\end{align*}
Then, the eigenvalues of $Q(\xi)$ are given by:
\begin{equation}\label{eigval}
\begin{split}
\lambda_1(\xi) &= -2 \gamma_p \xi^2 - \tfrac{2}{3} \gamma_z
+ \tfrac{1}{3} \left( q(\xi) + 12 \sqrt{p(\xi)} \right)^{1/3}
- \frac{r(\xi)}{\left( q(\xi) + 12 \sqrt{p(\xi)} \right)^{1/3}}, \\
\lambda_2(\xi) &= -2 \gamma_p \xi^2 - \tfrac{2}{3} \gamma_z
- \tfrac{1}{6} \left( q(\xi) + 12 \sqrt{p(\xi)} \right)^{1/3}
+ \frac{r(\xi)}{2 \left( q(\xi) + 12 \sqrt{p(\xi)} \right)^{1/3}} \\
&\quad + \frac{i \sqrt{3}}{2} \left( \tfrac{1}{3} \left( q(\xi) + 12 \sqrt{p(\xi)} \right)^{1/3}
+ \frac{r(\xi)}{\left( q(\xi) + 12 \sqrt{p(\xi)} \right)^{1/3}} \right), \\
\lambda_3(\xi) &= -2 \gamma_p \xi^2 - \tfrac{2}{3} \gamma_z
- \tfrac{1}{6} \left( q(\xi) + 12 \sqrt{p(\xi)} \right)^{1/3}
+ \frac{r(\xi)}{2 \left( q(\xi) + 12 \sqrt{p(\xi)} \right)^{1/3}} \\
&\quad - \frac{i \sqrt{3}}{2} \left( \tfrac{1}{3} \left( q(\xi) + 12 \sqrt{p(\xi)} \right)^{1/3}
+ \frac{r(\xi)}{\left( q(\xi) + 12 \sqrt{p(\xi)} \right)^{1/3}} \right).
\end{split}
\end{equation}
The matrix $U(\xi)$ that diagonalizes $Q(\xi)$ can be obtained explicitly, noting that the eigenvectors $v_{i}(\xi), i=1,2,3,$ can be expressed as
$$
v_{i}(\xi)=C(\xi)\begin{pmatrix}
-\dfrac{2i\Delta\xi}{\,2\gamma_p\xi^2+\lambda_i(\xi)\,} \\[1.2ex]
\dfrac{\Omega}{\,2\gamma_p\xi^2+2\gamma_z+\lambda_i(\xi)\,} \\[1.2ex]
1
\end{pmatrix},\quad i=1,2,3,
$$
for some (possibly normalization) factor $C(\xi)$. If one of the denominators vanish, then we can choose a different free component (other than the third one) and solve directly the system $(\lambda I-Q(\xi))v=0$.

\medskip

For the Green's function \eqref{genGreenF} to be stable as $t \to \infty$, the eigenvalues $\lambda_i(\xi)$, $i=1,2,3$, must satisfy $\Re \lambda_i(\xi) \le 0$. This is the content of the next result.

\begin{pro}
Let $\Delta,\gamma_z,\gamma_p,\Omega>0$ and $\xi\in\mathbb{R}$.  Let 
$\lambda_1(\xi),\lambda_2(\xi),\lambda_3(\xi)$ be the eigenvalues \eqref{eigval} of $Q(\xi)$ in \eqref{QQQ}. Then:
\begin{itemize}
\item If $\xi\neq0$ then $\Re\lambda_j(\xi)<0$ for $j=1,2,3$.
\item If $\xi=0$ one has $\lambda_1(0)=0$ and $\Re\lambda_{j}(0)<0$ for $j=2,3$.
\end{itemize}

\end{pro}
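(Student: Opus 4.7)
My proposed approach is to bypass the explicit cubic (Cardano) formulas in \eqref{eigval} entirely and instead work with the characteristic polynomial of the shifted matrix $M(\xi):=Q(\xi)+2\gamma_p\xi^{2}I$. Since the shift is a scalar multiple of the identity, the eigenvalues of $Q(\xi)$ are exactly $\lambda_{j}(\xi)=-2\gamma_p\xi^{2}+\mu_{j}(\xi)$, where $\mu_{j}(\xi)$ denote the eigenvalues of $M(\xi)$. Because $-2\gamma_p\xi^{2}<0$ for every $\xi\neq 0$, it will suffice to prove that $\Re\mu_{j}(\xi)\le 0$ in general and to locate the eigenvalues of $M(0)$ precisely.

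A cofactor expansion along the first row of $\mu I - M(\xi)$ produces the real-coefficient cubic
\[
\det(\mu I - M(\xi)) \;=\; \mu^{3} + 2\gamma_{z}\,\mu^{2} + 4(\Omega^{2}+\Delta^{2}\xi^{2})\,\mu + 8\gamma_{z}\Delta^{2}\xi^{2}.
\]
All coefficients are nonnegative and, for $\xi\neq 0$, strictly positive. The Routh--Hurwitz criterion for a monic real cubic $\mu^{3}+a_{1}\mu^{2}+a_{2}\mu+a_{3}$ asserts that all roots lie in the open left half-plane if and only if $a_{1},a_{2},a_{3}>0$ and $a_{1}a_{2}>a_{3}$. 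In our case
\[
a_{1}a_{2}-a_{3} \;=\; 2\gamma_{z}\cdot 4(\Omega^{2}+\Delta^{2}\xi^{2}) \;-\; 8\gamma_{z}\Delta^{2}\xi^{2} \;=\; 8\gamma_{z}\Omega^{2},
\]
which is strictly positive and independent of $\xi$. Hence, for $\xi\neq 0$, every root $\mu_{j}(\xi)$ satisfies $\Re\mu_{j}(\xi)<0$, and therefore $\Re\lambda_{j}(\xi)=-2\gamma_p\xi^{2}+\Re\mu_{j}(\xi)<0$.

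For $\xi=0$ the characteristic polynomial factors as $\mu\,(\mu^{2}+2\gamma_{z}\mu+4\Omega^{2})$, yielding the root $\mu=0$ together with the two roots of a quadratic whose sum is $-2\gamma_{z}$ and whose product is $4\Omega^{2}$; these are either two strictly negative reals (when $\gamma_{z}^{2}\ge 4\Omega^{2}$) or a complex conjugate pair with real part $-\gamma_{z}<0$. Undoing the trivial shift, the eigenvalues of $Q(0)$ are $0$ together with two values of real part $-\gamma_{z}$.

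The only remaining task is pure bookkeeping: to verify that the branch labelled $\lambda_{1}$ in \eqref{eigval} is precisely the one that reduces to $0$ at $\xi=0$. I expect this to be the only mildly delicate step, since it requires substituting $\xi=0$ into the Cardano expression for $\lambda_{1}$, recognizing that the cube-root contribution must cancel the $-\tfrac{2}{3}\gamma_{z}$ summand, and splitting into the cases $4\Omega^{2}\gtrless\gamma_{z}^{2}$ to track whether $\sqrt{p(0)}$ is real or purely imaginary. Modulo that identification, the sign of the real parts is an immediate consequence of Routh--Hurwitz applied to $M(\xi)$.
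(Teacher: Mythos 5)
Your proof is correct and rests on the same key tool as the paper's: the Routh--Hurwitz criterion for a real monic cubic. The one genuine difference is that you first strip off the scalar diffusion part, applying the criterion to $M(\xi)=Q(\xi)+2\gamma_p\xi^{2}I$ rather than to $Q(\xi)$ itself; this reduces the decisive quantity $a_{1}a_{2}-a_{3}$ to the constant $8\gamma_{z}\Omega^{2}$, whereas the paper must verify positivity of a degree-six polynomial in $\xi$ with several terms, so your version is cleaner and makes the mechanism (the $\Omega^{2}\gamma_{z}$ term is what saves strict stability) more transparent. Two small points: your closing summary that $Q(0)$ has ``two values of real part $-\gamma_{z}$'' holds only in the complex-conjugate case $\gamma_{z}^{2}<4\Omega^{2}$ (in the real case the real parts are $-\gamma_{z}\pm\sqrt{\gamma_{z}^{2}-4\Omega^{2}}$, both negative, as your preceding sentence correctly states); and the labelling task you defer --- checking that the branch called $\lambda_{1}$ in \eqref{eigval} is the one vanishing at $\xi=0$ --- is likewise left implicit in the paper, so it is not a gap relative to the published argument.
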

\begin{proof}
The characteristic polynomial of $Q(\xi)$ is
\[
\chi(\lambda)=\mbox{det}(\lambda I-Q(\xi))=\lambda^{3}+a_{1}\lambda^{2}+a_{2}\lambda+a_{3},
\]
where
\[
\begin{aligned}
a_{1} &= 6\gamma_p\xi^{2}+2\gamma_z,\\[4pt]
a_{2} &=12\gamma_p^{2}\xi^{4}+(4\Delta^{2}+8\gamma_p\gamma_z)\xi^{2}+4\Omega^{2},\\[4pt]
a_{3} &=\xi^2(8\gamma_p^{3}\xi^4 + (8\Delta^{2}\gamma_p+8\gamma_p^{2}\gamma_z)\xi^2+8\Omega^{2}\gamma_p+8\Delta^{2}\gamma_z).
\end{aligned}
\]
For $\xi\neq0$ we have $a_1,a_2,a_3>0$. Moreover a direct algebraic simplification shows
\[
a_1a_2-a_3=64\gamma_p^3\xi^6+(16 \Delta^2\gamma_p+64 \gamma_p^2 \gamma_z)\xi^4+(16 \Omega^2\gamma_p+16\gamma_p \gamma_z^2)\xi^2+8\Omega^2 \gamma_z.
\]
Hence $a_1a_2-a_3>0$. By the Routh--Hurwitz criterion for cubic polynomials \cite{Gantmacher1959}, all roots of $\chi(\xi)$ then have strictly negative real parts. For $\xi=0$, one of the eigenvalues of $Q(0)$ is 0, while the other two are $-\gamma_z\pm\sqrt{\gamma_z^2-4\Omega^2}$. Since $\gamma_z>\sqrt{\gamma_z^2-4\Omega^2}$ we have that the real parts are strictly negative.
\end{proof}

The complex structure of the eigenvalues \eqref{eigval} of $Q(\xi)$ makes it unlikely to obtain an explicit expression for the inverse Fourier transform in the Green's function \eqref{genGreenF}. Consequently, in order to derive explicit solutions, it is necessary to set some of the parameters $\Omega$, $\Delta$, or $\gamma_z$ to zero (the case $\gamma_p=0$ is excluded, as it removes the diffusion term from the process). This will be the goal of the following sections.

\section{Explicit solutions for $\Omega=0$}\label{sec4}

Setting $\Omega=0$ in the master equation \eqref{eq:13} effectively removes the unitary contribution in the Lindblad term, 
thereby eliminating the coherent part of the dissipative dynamics associated with the internal degree of freedom of the Brownian particle. In this situation the expression of the eigenvalues and eigenvectors simplify considerably. Indeed, the eigenvalues are given by
\begin{align*}
\lambda_1(\xi) &= -2\gamma_p \, \xi^2 - 2 \gamma_z, \\
\lambda_2(\xi) &= -2\gamma_p \, \xi^2 + 2 i \Delta \, \xi, \\
\lambda_3(\xi) &= -2\gamma_p \, \xi^2 - 2 i \Delta \, \xi,
\end{align*}
while the corresponding diagonalization matrix $U(\xi) = U$ is constant:
\[
U = \begin{pmatrix}
 0& -1 &1  \\
 1& 0 &0  \\
 0& 1 &1 
\end{pmatrix}.
\]
From here we can easily compute $e^{t Q(\xi)}$, which is given by
\[
e^{t Q(\xi)} = 
e^{-2 \gamma_p t \, \xi^2} 
\begin{pmatrix}
\cos(2 \Delta t \, \xi) & 0 & -i \, \sin(2 \Delta t \, \xi) \\
0 & e^{-2 \gamma_z t} & 0 \\
- i \, \sin(2 \Delta t \, \xi) & 0 & \cos(2 \Delta t \, \xi)
\end{pmatrix}.
\]
By performing the inverse Fourier transform to $e^{t Q(\xi)}$ we obtain the
matrix-valued Green's function \eqref{genGreen}, given by
\begin{equation*}\label{ggffo0}
G(t,x) =
\frac{e^{-x^2/(8\gamma_p t)}}{\sqrt{8 \pi \gamma_p t}}
\begin{pmatrix}
e^{-\Delta^2 t/(2\gamma_p)} \cosh\left(\frac{x \Delta}{2\gamma_p}\right) & 0 & e^{-\Delta^2 t/(2\gamma_p)} \sinh\left(\frac{x \Delta}{2\gamma_p}\right) \\
0 & e^{-2\gamma_z t} & 0 \\
e^{-\Delta^2 t/(2\gamma_p)} \sinh\left(\frac{x \Delta}{2\gamma_p}\right) & 0 & e^{-\Delta^2 t/(2\gamma_p)} \cosh\left(\frac{x \Delta}{2\gamma_p}\right)
\end{pmatrix}.
\end{equation*}

\medskip

We are now ready to derive explicit solutions of the generalized parabolic system \eqref{GPS}, provided that suitable initial conditions \eqref{iccs} are prescribed. Observe that, given the structure of the matrix-valued Green's function above, 
there is no advantage in considering a non-diagonal initial density matrix $\rho(0,x)$, since all the information on the probability density and the population imbalance is contained in the main diagonal.

\begin{enumerate}
\item\textbf{Gaussian initial condition}. Consider a diagonal initial condition with different Gaussian distributions of the form:
\begin{equation}\label{iccsGau}
\rho(0,x)=\begin{pmatrix}\frac{p}{\sqrt{2\pi}\,\sigma_1} \, e^{-x^2/(2\sigma_1^2)}&0\\
0&\frac{1-p}{\sqrt{2\pi}\,\sigma_2} \, e^{-x^2/(2\sigma_2^2)}
\end{pmatrix},\; x\in\mathbb{R},\; 0<p<1,\;\sigma_1,\sigma_2>0.
\end{equation}
Then (see \eqref{iccs3})
$$
\vec{u}_0(x)=\left(\frac{p}{\sqrt{2\pi}\,\sigma_1} \, e^{-x^2/(2\sigma_1^2)}+\frac{1-p}{\sqrt{2\pi}\,\sigma_2} \, e^{-x^2/(2\sigma_2^2)},0,\frac{p}{\sqrt{2\pi}\,\sigma_1} \,e^{-x^2/(2\sigma_1^2)}-\frac{1-p}{\sqrt{2\pi}\,\sigma_2} \, e^{-x^2/(2\sigma_2^2)}\right)^T.
$$ 
The explicit solution of \eqref{GPS} can be computed directly from \eqref{eq:solPDEg} (since the convolution of Gaussian distributions is another Gaussian distribution), in which case we have
\[
\vec{u}(t,x)=\frac{1}{\sqrt{2\pi}\sqrt{4\gamma_z t + \sigma_1^2}}\begin{pmatrix}
p\,e^{-\frac{(x-2\Delta t )^2}{8\gamma_z t + 2 \sigma_1^2}}+(1-p)e^{-\frac{(x+2\Delta t)^2}{8\gamma_z t + 2 \sigma_2^2}} \\[2ex]
0\\[2ex]
p\,e^{-\frac{(x-2\Delta t )^2}{8\gamma_z t + 2 \sigma_1^2}}-(1-p)e^{-\frac{(x+2\Delta t)^2}{8\gamma_z t + 2 \sigma_2^2}}
\end{pmatrix},
\]
and the solutions of the density matrix $\rho(t,x)$ are given by $\rho_{12}(t,x)=0$ and
\[
\rho_{11}(t,x)=
\frac{p\,e^{-\frac{(x-2\Delta t )^2}{8\gamma_z t + 2 \sigma_1^2}}}{\sqrt{2\pi} \,\sqrt{4\gamma_z t + \sigma_1^2}},\quad\rho_{22}(t,x)=\frac{(1-p)e^{-\frac{(x+2\Delta t)^2}{8\gamma_z t + 2 \sigma_2^2}}}{\sqrt{2\pi} \, \sqrt{4\gamma_z t + \sigma_2^2}}.
\]
Therefore, the probability density $P(t,x) = \operatorname{Tr}[\rho(t,x)]$ is given by
\begin{equation}\label{PDF1}
P(t,x)=\frac{p\,e^{-\frac{(x-2\Delta t )^2}{8\gamma_z t + 2 \sigma_1^2}}}{\sqrt{2\pi} \,\sqrt{4\gamma_z t + \sigma_1^2}}+\frac{(1-p)e^{-\frac{(x+2\Delta t)^2}{8\gamma_z t + 2 \sigma_2^2}}}{\sqrt{2\pi} \, \sqrt{4\gamma_z t + \sigma_2^2}},
\end{equation}
while the population imbalance $Q(t,x)=\rho_{11}(t,x)-\rho_{22}(t,x)$ is
\begin{equation}\label{PI11}
Q(t,x)=\frac{p\,e^{-\frac{(x-2\Delta t )^2}{8\gamma_z t + 2 \sigma_1^2}}}{\sqrt{2\pi} \,\sqrt{4\gamma_z t + \sigma_1^2}}-\frac{(1-p)e^{-\frac{(x+2\Delta t)^2}{8\gamma_z t + 2 \sigma_2^2}}}{\sqrt{2\pi} \, \sqrt{4\gamma_z t + \sigma_2^2}}.
\end{equation}
Figure \ref{fig1} shows the probability distribution and the population imbalance of the open quantum Brownian particle for different values of the parameters and time evolution. Observe that a particular case of this example (for $p=3/4$ and $\sigma_1=\sigma_2=\sqrt{2}/2$) was already studied in (17) of \cite{sp15}.
\begin{figure}[!ht]
    \centering
    \begin{subfigure}[b]{0.5\textwidth}
        \centering
        \includegraphics[height=3.50in]{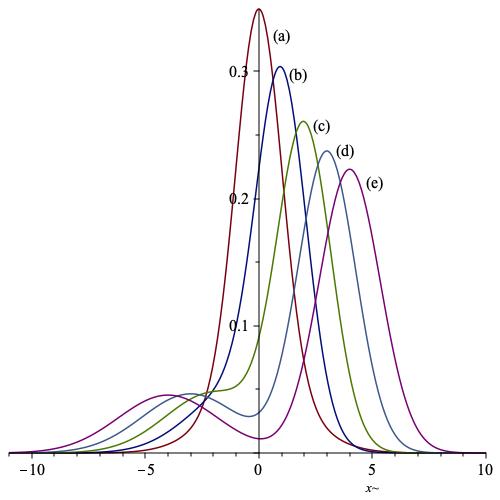}
    \end{subfigure}%
    ~ 
    \begin{subfigure}[b]{0.5\textwidth}
        \centering
        \includegraphics[height=3.50in]{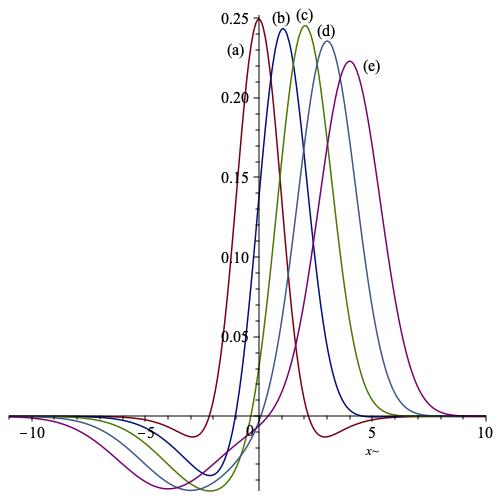}
           \end{subfigure}
  \caption{The probability distribution \eqref{PDF1} (left) and the population imbalance \eqref{PI11} (right) of the OQBM for different moments of time. The initial distribution is given by \eqref{iccsGau}. The curves from (a) to (e) corresponds to times $0, 50, 100, 150, 200$, respectively. The other parameters are chosen to be $p=3/4,\sigma_1=1,\sigma_2=2,\gamma_p=10^{-3},\Delta=10^{-2}, \gamma_z=10^{-3}$.}
\label{fig1}
\end{figure}

\medskip

\item \textbf{Laplacian initial condition}. Consider now an initial condition with different Laplace distributions of the form
\begin{equation}\label{iccsGau2}
\rho(0,x)=\begin{pmatrix}\displaystyle\frac{p}{2a}e^{-|x|/a}&0\\
0&\displaystyle\frac{1-p}{2b}e^{-|x|/b}
\end{pmatrix},\; x\in\mathbb{R},\; a,b>0,\; 0<p<1.
\end{equation}
The explicit solution of \eqref{GPS} can be computed directly from \eqref{eq:solPDEg}, in which case we have
\[
\vec{u}(t,x)=\begin{pmatrix}\rho_{11}(t,x)+\rho_{2}(t,x) \\[2ex]
0 \\[2ex]
\rho_{11}(t,x)-\rho_{2}(t,x) 
\end{pmatrix},
\]
where
\[
\rho_{11}(t,x)=\frac{p}{4a} \, e^{\frac{2\gamma_p t}{a^{2}}}
\left[
e^{\frac{2\Delta t - x}{a}} \,
\operatorname{erfc}\!\left(
\frac{t\Delta - \frac{x}{2} + \frac{2\gamma_p t}{a}}{\sqrt{2\gamma_p t}}
\right)
+
e^{-\frac{2\Delta t - x}{a}} \,
\operatorname{erfc}\!\left(
\frac{-t\Delta + \frac{x}{2} + \frac{2\gamma_p t}{a}}{\sqrt{2\gamma_p t}}
\right)
\right],
\]
\[
\rho_{22}(t,x)=\frac{1-p}{4b}\,e^{\frac{2\gamma_p t}{b^{2}}}\!
\left[
e^{\frac{2\Delta t + x}{b}}\,
\operatorname{erfc}\!\!\left(\frac{t\Delta + \tfrac{x}{2} + \tfrac{2\gamma_p t}{b}}
{\sqrt{2\gamma_p t}}\right)
+
e^{-\frac{2\Delta t + x}{b}}\,
\operatorname{erfc}\!\!\left(\frac{-t\Delta - \tfrac{x}{2} + \tfrac{2\gamma_p t}{b}}
{\sqrt{2\gamma_p t}}\right)
\right].
\]
Here $\operatorname{erfc}(x)$ is the \emph{complementary error function}, given by $\operatorname{erfc}(x)=1-\operatorname{erf}(x)$ where $\operatorname{erf}(x)=\frac{2}{\sqrt{\pi}}\int_0^xe^{-t^2}dt$ is the usual \emph{error function}.

\medskip

The probability density and the population imbalance are given by 
\begin{equation}\label{PDF2}
P(t,x) = \operatorname{Tr}[\rho(t,x)]=\rho_{11}(t,x)+\rho_{22}(t,x),\quad Q(t,x) =\rho_{11}(t,x)-\rho_{22}(t,x).
\end{equation}
Figure \ref{fig2} shows the probability distribution and the population imbalance of the open quantum Brownian particle for different values of the parameters and time evolution.

\begin{figure}[!ht]
    \centering
    \begin{subfigure}[b]{0.5\textwidth}
        \centering
        \includegraphics[height=3.50in]{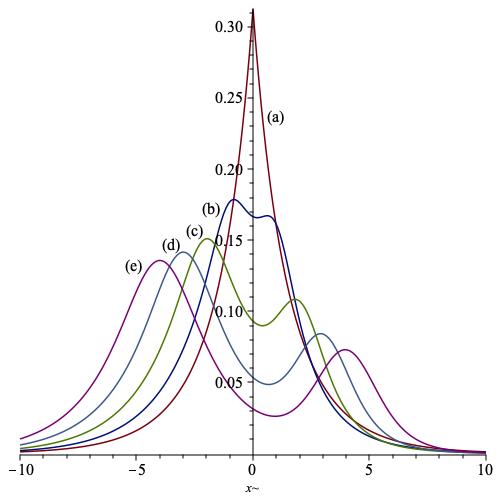}
    \end{subfigure}%
    ~ 
    \begin{subfigure}[b]{0.5\textwidth}
        \centering
        \includegraphics[height=3.50in]{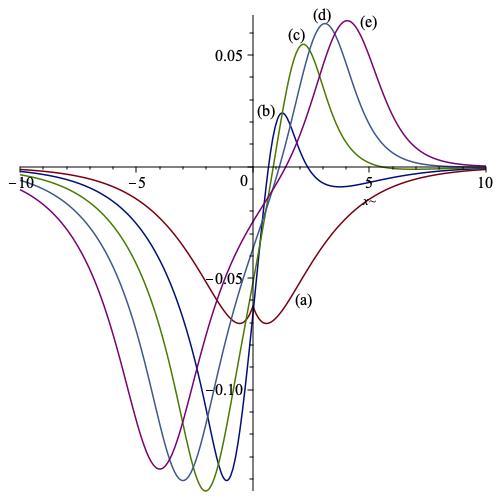}
           \end{subfigure}
  \caption{The probability distribution \eqref{PDF2} (left) and the population imbalance \eqref{PDF2} (right) of the OQBM for different moments of time. The initial distribution is given by \eqref{iccsGau2}. The curves from (a) to (e) corresponds to times $0, 50, 100, 150, 200$, respectively. The other parameters are chosen to be $p=1/4,a=1,b=2,\gamma_p=10^{-3},\Delta=10^{-2}, \gamma_z=10^{-3}$.}
\label{fig2}
\end{figure}

%

\medskip

\item \textbf{Uniform initial distribution}. Consider now an initial condition with different uniform distributions centered at $x=0$:
\begin{equation}\label{iccsGau3}
\rho(0,x)=\begin{pmatrix}\displaystyle\frac{p}{2a}\chi_{[-a,a]}(x)&0\\
0&\displaystyle\frac{1-p}{2b}\chi_{[-b,b]}(x)
\end{pmatrix},\; x\in\mathbb{R},\; a,b>0,\; 0<p<1,
\end{equation}
where $\chi_A$ denotes the indicator function. The explicit solution of \eqref{GPS} can be computed directly from \eqref{eq:solPDEg}, in which case we have
\[
\vec{u}(t,x)=\begin{pmatrix}\rho_{11}(t,x)+\rho_{2}(t,x) \\[2ex]
0 \\[2ex]
\rho_{11}(t,x)-\rho_{2}(t,x) 
\end{pmatrix},
\]
where
\[
\rho_{11}(t,x)=\frac{p}{4a} \,
\left[
\operatorname{erf}\!\left(
\frac{x+a-2t\Delta}{2\sqrt{2\gamma_p t}}
\right)
-
\operatorname{erf}\!\left(
\frac{x-a-2t\Delta}{2\sqrt{2\gamma_p t}}
\right)
\right],
\]
\[
\rho_{22}(t,x)=\frac{1-p}{4b}\!
\left[
\operatorname{erf}\!\!\left(\frac{x+b+2t\Delta}
{2\sqrt{2\gamma_p t}}\right)
-
\operatorname{erf}\!\!\left(\frac{x-b+2t\Delta}
{2\sqrt{2\gamma_p t}}\right)
\right].
\]
Again $\operatorname{erf}(x)$ is the usual error function. The probability density and the population imbalance are given by 
\begin{equation}\label{PDF3}
P(t,x) = \operatorname{Tr}[\rho(t,x)]=\rho_{11}(t,x)+\rho_{22}(t,x),\quad Q(t,x) = \rho_{11}(t,x)-\rho_{22}(t,x).
\end{equation}
Figure \ref{fig3} shows the probability distribution and the population imbalance of the open quantum Brownian particle for different values of the parameters and time evolution.
\begin{figure}[!ht]
    \centering
    \begin{subfigure}[b]{0.5\textwidth}
        \centering
        \includegraphics[height=3.50in]{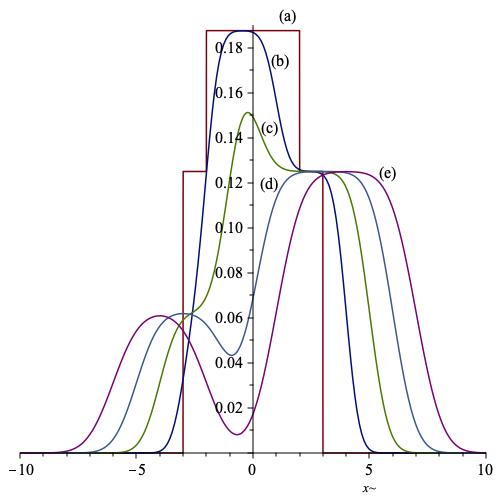}
    \end{subfigure}%
    ~ 
    \begin{subfigure}[b]{0.5\textwidth}
        \centering
        \includegraphics[height=3.50in]{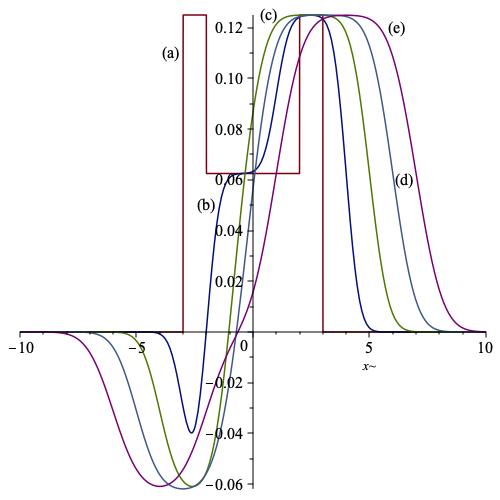}
           \end{subfigure}
  \caption{The probability distribution \eqref{PDF3} (left) and the population imbalance (right) of the OQBM for different moments of time. The initial distribution is given by \eqref{iccsGau3}. The curves from (a) to (e) corresponds to times $0, 50, 100, 150, 200$, respectively. The other parameters are chosen to be $p=3/4,a=3,b=2,\gamma_p=10^{-3},\Delta=10^{-2}, \gamma_z=10^{-3}$.}
\label{fig3}
\end{figure}


\end{enumerate}

It is observed in all cases that, irrespective of the initial conditions, the probability density approaches a superposition of two Gaussian distributions as $t\to\infty$.

\section{Explicit solutions for $\Delta=0$}\label{sec5}

In the specific case where $\Delta=0$, the master equation \eqref{eq:13} simplifies significantly. The term responsible for the ``open'' dynamics vanishes, decoupling the internal and external degrees of freedom of the Brownian particle. Even in this simplified scenario, the generalized parabolic system \eqref{GPS} provides a correct description, allowing for the explicit computation of its matrix-valued Green's function.

In this situation the expression of the eigenvalues and eigenvectors simplify considerably. Indeed, the eigenvalues are
\begin{align*}
\lambda_1(\xi) &= -2\gamma_p \, \xi^2, \\
\lambda_2(\xi) &= -2\gamma_p \xi^2 - \gamma_z +\sqrt{\gamma_z^2 - 4\Omega^2}, \\
\lambda_3(\xi) &= -2\gamma_p \xi^2 - \gamma_z - \sqrt{\gamma_z^2 - 4\Omega^2},
\end{align*}
while the corresponding diagonalization matrix $U(\xi) = U$ is constant:
    \[
    U = \begin{pmatrix}
        1 & 0 & 0 \\
        0 & \dfrac{1}{4\Omega} \left( \gamma_z - \sqrt{\gamma_z^2 - 4\Omega^2} \right) & \dfrac{1}{4\Omega} \left( \gamma_z + \sqrt{\gamma_z^2 - 4\Omega^2} \right) \\
        0 & 1 & 1
    \end{pmatrix}.
    \]
Depending on the relation between $\gamma_z$ and $\Omega$ we will have different matrix-valued Green's functions. From now on we will use the following notation
$$
\omega_{\pm}=\sqrt{\pm\gamma_z^2\mp4\Omega^2}.
$$

\begin{itemize}
\item \textbf{Case $\gamma_z>2\Omega>0$}. We can easily compute $e^{tQ(\xi)}$, which is given by
\[
e^{tQ(\xi)}=e^{-2\gamma_p \xi^2 t} \begin{pmatrix}
1& 0 & 0 \\
0 & e^{- \gamma_z t} \left( \cosh\left( t \omega_+ \right) - \dfrac{\gamma_z}{\omega_+} \sinh\left( t \omega_+\right) \right) & \dfrac{\Omega}{\omega_+}e^{- \gamma_z t}\sinh\left( t \omega_+ \right) \\
0 & -\dfrac{4\Omega}{\omega_+} e^{- \gamma_z t}\sinh\left( t \omega_+ \right) & e^{- \gamma_z t} \left( \cosh\left( t \omega_+ \right) + \dfrac{\gamma_z}{\omega_+} \sinh\left( t \omega_+ \right) \right)
\end{pmatrix}.
\]
By performing the inverse Fourier transform to $e^{t Q(\xi)}$ we obtain
matrix-valued Green's function \eqref{genGreen}, given by
\[
G(t,x) = \frac{e^{-\frac{x^2}{8t\gamma_p}}}{2\sqrt{2\pi\gamma_pt}}
\begin{pmatrix}
1 & 0 & 0 \\[1em]
0 & 
\displaystyle e^{-\gamma_zt}\left(\cosh(\omega_+ t)-\frac{\gamma_z}{\omega_+}\sinh(\omega_+ t)\right)
& 
\displaystyle \frac{\Omega}{\omega_+} e^{-\gamma_zt} \sinh(\omega_+ t) \\[1em]
0 & 
\displaystyle -\frac{4\Omega}{\omega_+} e^{-\gamma_zt} \sinh(\omega_+ t) 
& 
\displaystyle e^{-\gamma_zt} \left(\cosh(\omega_+ t)+\frac{\gamma_z}{\omega_+}\sinh(\omega_+ t)\right)
\end{pmatrix}.
\]
\item \textbf{Case $0<\gamma_z<2\Omega$}. We can easily compute $e^{tQ(\xi)}$, which is given in this case by
\[
e^{tQ(\xi)}=e^{-2\gamma_p \xi^2 t}\begin{pmatrix}
1 & 0 & 0 \\
0 & e^{- \gamma_z t} \left( \cos(\omega_- t) - \dfrac{\gamma_z}{\omega_-} \sin(\omega_- t) \right) & \dfrac{\Omega}{\omega_-}e^{- \gamma_z t} \sin(\omega_- t) \\
0 & -\dfrac{4\Omega}{\omega_-} e^{- \gamma_z t}\sin(\omega_- t) & e^{- \gamma_z t} \left( \cos(\omega_- t) + \dfrac{\gamma_z}{\omega_-} \sin(\omega_- t) \right)
\end{pmatrix},
\]
By performing the inverse Fourier transform to $e^{t Q(\xi)}$ we obtain
matrix-valued Green's function \eqref{genGreen}, given by
\begin{equation}\label{GGd0}
G(t,x)=\frac{e^{-\frac{x^2}{8t\gamma_p}}}{2\sqrt{2\pi \gamma_pt}}\begin{pmatrix}
1& 0 & 0 \\[1em]
0 & \displaystyle e^{-\gamma_zt}\left(\cos(\omega_- t)-\dfrac{\gamma_z}{\omega_-}\sin(\omega_- t)\right)
&\displaystyle \frac{\Omega}{\omega_-} e^{-\gamma_zt}\sin(\omega_- t) \\[1em]
0 & -\displaystyle\frac{4\Omega}{\omega_-} e^{-\gamma_zt}\sin(\omega_- t) & \displaystyle e^{-\gamma_zt}\left(\cos(\omega_- t)+\dfrac{\gamma_z}{\omega_-}\sin(\omega_- t)\right)
\end{pmatrix}.
\end{equation}

\item \textbf{Case $\gamma_z=2\Omega>0$}. In this specific limit, the matrix exponential $e^{tQ(\xi)}$ can be computed directly. The resulting expression is:
$$
e^{tQ(\xi)}=
e^{-2\gamma_p \xi^2 t}
\begin{pmatrix}
1 & 0 & 0 \\[4pt]
0 & \left( 1-2\Omega t  \right)e^{-2\Omega t}  & \Omega t e^{-2\Omega t} \,\\[4pt]
0 & -4\Omega t\, e^{-2\Omega t} &  \left( 1 + 2\Omega t \right)e^{-2\Omega t}
\end{pmatrix}.
$$
By performing the inverse Fourier transform to $e^{t Q(\xi)}$ we obtain
matrix-valued Green's function \eqref{genGreen}, given by
$$
G(t,x)=\frac{e^{-\frac{x^2}{8t\gamma_p}}}{2\sqrt{2\pi\gamma_pt}}
\begin{pmatrix}
1& 0 & 0 \\[8pt]
0 & (1-2t\Omega)e^{-2t\Omega}&
t\Omega e^{-2t\Omega} \\[8pt]
0 & -4t\Omega e^{-2t\Omega} &
(1+2t\Omega)e^{-2t\Omega}
\end{pmatrix}.
$$
\end{itemize}

From the previous analysis, we can clearly see that the Green's function is given by a scalar Gaussian factor multiplied by a matrix that depends only on $t$ (and the free parameters). Therefore, all computations required to obtain the explicit solutions are simpler than in the case $\Omega = 0$.

\medskip

The structure of the matrix-valued Green's function implies that the probability density of the process, $P(t,x)$, is given by the convolution of the initial probability density with a Gaussian kernel:
$$
 P(t,x) = \frac{1}{2\sqrt{2 \pi \gamma_p t}} \int_{-\infty}^\infty e^{-\frac{(x-y)^2}{8t\gamma_p}} \left( \psi_{11}(y)+\psi_{22}(y) \right) dy,
$$
where $\psi_{11}(x)+\psi_{22}(x)$ is the initial probability density from \eqref{iccs}. By the properties of convolution with a Gaussian (or heat) kernel, it follows that for sufficiently large $t$, the distribution $P(t,x)$ will asymptotically approach a Gaussian profile, so there is no need to plot $P(t,x)$ in this situation. However, the population imbalance $Q(t,x)=\rho_{11}(t,x)-\rho_{22}(t,x)$ can offer some insights, since it is linked to the other entries of the matrix-valued Green's function, particularly at certain times. For simplicity we will focus on the case $\gamma_z^2<4\Omega^2$, in which case we have to use \eqref{GGd0}. Then we have
\begin{equation}\label{PIqq}
\begin{split}
 Q(t,x) &= -\frac{2\Omega e^{-\gamma_zt}\sin(\omega_-t)}{\omega_-\sqrt{2 \pi \gamma_p t}} \int_{-\infty}^\infty e^{-\frac{(x-y)^2}{8t\gamma_p}}\Im(\psi_{12}(y))dy\\
 &\qquad \frac{e^{-\gamma_zt}}{2\sqrt{2\pi\gamma_pt}}\left(\cos(\omega_-t)+\frac{\gamma_z}{\omega_-}\sin(\omega_-t)\right)\int_{-\infty}^\infty e^{-\frac{(x-y)^2}{8t\gamma_p}}\left(\psi_{11}(y)-\psi_{22}(y)\right)dy.
 \end{split}
\end{equation}
For instance, if we take a Gaussian initial condition like in \eqref{iccsGau} we have that $\Im(\psi_{12})=0$ and then

\begin{equation}\label{PIqqq}
 Q(t,x)  = \frac{e^{-\gamma_z t}\left(\gamma_z \sin(\omega_- t) + \omega_- \cos(\omega_- t)\right)}{\sqrt{2\pi}\,\omega_-}
\left(
\frac{p \,\exp\!\left(- \tfrac{x^2}{2(4\gamma_p t + \sigma_1^2)}\right)}{\sqrt{4\gamma_p t + \sigma_1^2}}
-\frac{(1-p) \,\exp\!\left( - \tfrac{x^2}{2(4\gamma_p t + \sigma_2^2)}\right)}{\sqrt{4\gamma_p t + \sigma_2^2}}
\right).
\end{equation}
In Figure \ref{fig6} (left) we have plotted this $Q(t,x)$ for different values of the parameters and time evolution. Note that the values of the plots always oscillate between being strictly positive or strictly negative, and that for certain values of time, these plots vanish. It is easy to see that the times at which this happens are given by the zeros $\tau_n$ of the equation $\gamma_z \sin(\omega_- t) + \omega_- \cos(\omega_- t)=0$, which are given by
$$
\tau_n=n\pi-\dfrac{1}{\omega_-}\arctan\left(\dfrac{\omega_-}{\gamma_z}\right),\quad n\in\mathbb{N}.
$$
For the values of the parameters in Figure \ref{fig6}, the first of these zeros is located at 
$$
\tau_1=\frac{1000 \,\sqrt{399}}{399}\left(\dfrac{\pi}{2} + \arctan\!\Big(\dfrac{1}{\sqrt{399}}\Big)\right)\approx81.1423506200.
$$
This situation no longer arises if we require that $\Im(\psi_{12}) \neq 0$. For example, consider the initial condition
\begin{equation}\label{iccsGau22}
\rho(0,x)=\frac{1}{\sqrt{2\pi}\,\sigma} \, e^{-x^2/(2\sigma^2)}\begin{pmatrix}p&\mu\sqrt{p(1-p)}e^{ikx}\\
\mu\sqrt{p(1-p)}e^{-ikx}&1-p
\end{pmatrix},\; k\in\mathbb{R},\; 0<p,\mu<1,\;\sigma>0.
\end{equation}
Then we have that $\Im(\psi_{12}) \neq 0$ and the first term in \eqref{PIqq} no longer vanishes. However, we can obtain an explicit expression of $Q(t,x)$, given in this case by
\begin{equation}\label{PIqqqq}
\begin{split}
Q(t,x)&=
-\frac{2\Omega\,\mu\,\sqrt{p(1-p)}\,\sin(\omega_- t)}{\omega_- \sqrt{2\pi}\,\sqrt{4\gamma_p t + \sigma^2}} \,
\exp\!\Big(-\frac{16 \gamma_p k^2 \sigma^2 t + x^2}{2(8 \gamma_p t + \sigma^2)}\Big)\,
\sin\Big(\frac{k x \sigma^2}{4 \gamma_p t + \sigma^2}\Big)\\
&\quad +
\frac{(2p-1)}{\omega_- \sqrt{2\pi}\,\sqrt{4\gamma_p t + \sigma^2}} \left(\gamma_z \sin(\omega_- t) + \omega_- \cos(\omega_- t)\right)\,
\exp\!\Big(-\frac{16 \gamma_p \gamma_z t^2 + 2 \gamma_z \sigma^2 t + x^2}{2(4 \gamma_p t + \sigma^2)}\Big).
\end{split}
\end{equation}
In Figure \ref{fig6} (right) we have plotted this $Q(t,x)$ for different values of the parameters and time evolution. Note that the values of the plots are different from the values on the left.

\begin{figure}[!ht]
    \centering
    \begin{subfigure}[b]{0.5\textwidth}
        \centering
        \includegraphics[height=3.50in]{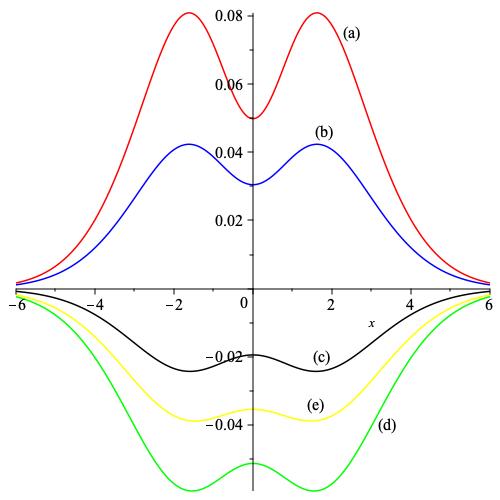}
    \end{subfigure}%
    ~ 
    \begin{subfigure}[b]{0.5\textwidth}
        \centering
        \includegraphics[height=3.50in]{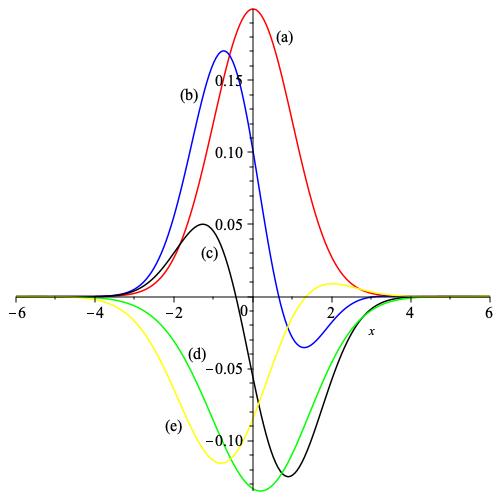}
           \end{subfigure}
  \caption{The population imbalance (\eqref{PIqqq} on the left and \eqref{PIqqqq} on the right) of the OQBM for different moments of time. On the left, the initial distribution is given by \eqref{iccsGau}. On the right, by \eqref{iccsGau22}. The curves from (a) to (e) corresponds to times $0, 50, 100, 150, 200$, respectively. The other parameters are chosen to be $p=3/4,\sigma_1=2, \sigma_2=1,\gamma_p=10^{-3},\gamma_z=10^{-3},\Omega=10^{-2}$ (on the left) and $\mu=4/5, k=1, p=3/4,\sigma=1,\gamma_p=10^{-3},\gamma_z=10^{-3},\Omega=10^{-2}$ (on the right).}
\label{fig6}
\end{figure}

\section{Explicit solutions for $\gamma_z=0$}\label{sec6}

Setting $\gamma_z=0$ in the master equation \eqref{eq:13} effectively eliminates the coherent component of the dissipative dynamics associated with the internal degree of freedom of the Brownian particle, while retaining the unitary contribution. In this case, the matrices $B$ and $C$ in \eqref{BBCC} no longer commute.

\medskip

The expression of the eigenvalues and eigenvectors simplify again considerably, but now the matrix $U(\xi)$ will have a dependence on $\xi$. Let us use the following notation
$$
\omega(\xi)=2\sqrt{\Delta^2\xi^2 + \Omega^2}.
$$
Then the eigenvalues of $Q(\xi)$ can be written as:
 \begin{align*}
     \lambda_3(\xi) &= -2\gamma_p \xi^2,\\
      \lambda_1(\xi) &= -2\gamma_p \xi^2 + i \omega(\xi), \\
        \lambda_2(z) &= -2\gamma_p \xi^2 - i \omega(\xi),
    \end{align*}   
    and the matrix $U(\xi)$ now depends on $\xi$:
    \[
    U(\xi) = \begin{pmatrix}
      \displaystyle \frac{2i\Omega}{\Delta \xi}&  \displaystyle -\frac{2\Delta \xi}{\omega(\xi)} & \displaystyle \frac{2\Delta \xi}{\omega(\xi)}  \\[1em]
       1& \displaystyle -\frac{i\Omega}{\omega(\xi)} & \displaystyle \frac{i\Omega}{\omega(\xi)} \\[1em]
     0&   1 & 1 
    \end{pmatrix}.
    \]
    
With this information we may compute $e^{t Q(\xi)}$, which is given by
\begin{equation}\label{etQz}
e^{tQ(\xi)} = e^{-2 t \xi^2 \gamma_p}
\begin{pmatrix}
\displaystyle \frac{4\Omega^2 +  4\xi^2 \Delta^2 \cos(t\omega(\xi))}{\omega(\xi)^2} &
\displaystyle -\frac{8i \Delta \xi \Omega ( \cos(t\omega(\xi)) - 1 )}{\omega(\xi)^2} &
\displaystyle - \frac{2i\xi \Delta \sin(t\omega(\xi))}{\omega(\xi)} \\[0.7em]

\displaystyle \frac{2i\Delta \xi \Omega ( \cos(t\omega(\xi)) - 1 )}{\omega(\xi)^2} &
\displaystyle \frac{4\xi^2 \Delta^2 + 4\Omega^2 \cos(t\omega(\xi))}{\omega(\xi)^2} &
\displaystyle \frac{\Omega \sin(t\omega(\xi))}{\omega(\xi)} \\[0.7em]

\displaystyle -\frac{2i\xi \Delta \sin(t\omega(\xi))}{\omega(\xi)} &
\displaystyle -\frac{4\Omega \sin(t\omega(\xi))}{\omega(\xi)} &
\displaystyle \cos(t\omega(\xi))
\end{pmatrix}.
\end{equation}
To obtain the matrix-valued Green’s function \eqref{genGreen}, we must compute the inverse Fourier transform of each entry of the matrix-valued function above. Unlike the two previous cases, this task is not straightforward. Nevertheless, we succeeded in deriving the matrix-valued Green’s function by employing convolution techniques, as established in the following result.

\begin{pro}
Define the following functions:
\begin{equation}\label{functtt}
\begin{split}
g(t,x)&=\frac{1}{2\sqrt{2\gamma_p\pi t}}e^{-\frac{x^2}{8\gamma_p t}},\\
h_{\pm}(t,x)&=\frac{\Omega}{4\Delta} \, \exp\!\left(\frac{2\Omega^{2}\gamma_p t}{\Delta^{2}}\right)
\left[
   e^{-\frac{\Omega x}{\Delta}} \, \operatorname{erfc}\!\left(\frac{4\gamma_p\Omega t - \Delta x}{2\Delta\sqrt{2\gamma_p t}}\right)
   \pm e^{\tfrac{\Omega x}{\Delta}} \, \operatorname{erfc}\!\left(\frac{4\gamma_p\Omega t + \Delta x}{2\Delta\sqrt{2\gamma_p t}}\right)
\right],\\
\kappa_0(t,x)&=\frac{1}{4\Delta}J_0\left(\frac{\Omega}{\Delta}\sqrt{4\Delta^2t^2-x^2}\right)\chi_{|x|<2\Delta t},\\
\kappa_1(t,x)&=\frac{1}{2}\left[\delta(x-2\Delta t)+\delta(x+2\Delta t)\right]-\frac{t\Omega}{\sqrt{4\Delta^2t^2-x^2}}J_1\left(\frac{\Omega}{\Delta}\sqrt{4\Delta^2t^2-x^2}\right)\chi_{|x|<2\Delta t},
\end{split}
\end{equation}
where $\operatorname{erfc}(x)$ denotes the complementary error function, $J_\alpha(z)$ the Bessel function of the first kind, $\delta(x)$ is the Dirac delta and $\chi_A$ the indicator function. Then, the matrix-valued Green's function \eqref{genGreen} can be written as
\begin{equation}\label{Greez}
G(t,x)=
\begin{pmatrix}
h_++(g-h_+)\ast\kappa_1& -2h_-+2(h_-\ast\kappa_1) & \frac{\Delta}{2\gamma_pt}(xg\ast\kappa_0) \\[8pt]
\frac{1}{2}h_--\frac{1}{2}(h_-\ast\kappa_1)  &g-h_++h_+\ast\kappa_1&\Omega(g\ast\kappa_0) \\[8pt]
\frac{\Delta}{2\gamma_pt}(xg\ast\kappa_0) & -4\Omega(g\ast\kappa_0) & g\ast\kappa_1
\end{pmatrix}.
\end{equation}
where $\ast$ denotes the usual convolution operator.
\end{pro}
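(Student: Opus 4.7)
The plan is to compute $G(t,x)$ entrywise as the inverse Fourier transform of $e^{tQ(\xi)}$ in (\ref{etQz}), recognizing each resulting expression as a convolution of a small set of building blocks. Every entry of (\ref{etQz}) carries the factor $e^{-2t\gamma_p\xi^2}$ (whose inverse Fourier transform is $g(t,x)$), and the remaining $\xi$-dependence enters only through $\omega(\xi)^2=4(\Delta^2\xi^2+\Omega^2)$ together with occasional factors of $i\xi$; by the convolution theorem these translate into spatial convolutions and derivatives on the $x$-side.

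The first step is to identify two families of Fourier pairs. Viewing $\omega(\xi)=\sqrt{c^2\xi^2+m^2}$ with $c=2\Delta$, $m=2\Omega$, the classical one-dimensional Klein--Gordon propagators give
\[
\mathcal{F}^{-1}\!\left[\frac{\sin(t\omega(\xi))}{\omega(\xi)}\right](x) = \kappa_0(t,x), \qquad
\mathcal{F}^{-1}[\cos(t\omega(\xi))](x) = \partial_t\kappa_0(t,x) = \kappa_1(t,x),
\]
the second identity being verified by differentiating the explicit $J_0$-expression of (\ref{functtt}) and collecting the two Dirac contributions at $|x|=2\Delta t$ produced by the characteristic function. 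Next, from the standard identities $\mathcal{F}^{-1}[(\Delta^2\xi^2+\Omega^2)^{-1}]=(2\Delta\Omega)^{-1}e^{-\Omega|x|/\Delta}$ and $\mathcal{F}^{-1}[i\xi/(\Delta^2\xi^2+\Omega^2)]=-(2\Delta^2)^{-1}\sgn(x)e^{-\Omega|x|/\Delta}$, one obtains
\[
h_+(t,x) = g(t,\cdot)\ast\tfrac{\Omega}{2\Delta}\,e^{-\Omega|\cdot|/\Delta}, \qquad
h_-(t,x) = g(t,\cdot)\ast\tfrac{\Omega}{2\Delta}\,\sgn(\cdot)e^{-\Omega|\cdot|/\Delta},
\]
and a completion-of-the-square in each Gaussian--exponential convolution (splitting at $x=0$) produces the explicit $\operatorname{erfc}$-expressions of (\ref{functtt}).

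The second step is the algebraic decomposition of each entry of $e^{tQ(\xi)}$ into terms matching the above building blocks. The governing identity is $4\Omega^2/\omega(\xi)^2 = 1 - 4\Delta^2\xi^2/\omega(\xi)^2$, which lets me rewrite the $(1,1)$ entry as $e^{-2t\gamma_p\xi^2}\bigl[\cos(t\omega)+(4\Omega^2/\omega^2)(1-\cos(t\omega))\bigr]$; its inverse Fourier transform is $g\ast\kappa_1+h_+-h_+\ast\kappa_1 = h_++(g-h_+)\ast\kappa_1$, matching (\ref{Greez}). The $(2,2)$ entry is handled in the same way and yields $g-h_++h_+\ast\kappa_1$. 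The entries containing $i\xi\,\Omega(\cos(t\omega)-1)/\omega^2$ unpack into linear combinations of $h_-$ and $h_-\ast\kappa_1$, whose scalar prefactors in (\ref{Greez}) match the coefficients of $i\xi\Omega$ appearing in (\ref{etQz}). Finally, entries with a factor $i\xi\sin(t\omega)/\omega$ become $\partial_x(g\ast\kappa_0)$, which, via $\partial_xg=-\tfrac{x}{4\gamma_p t}g$, rewrites as $-\tfrac{1}{4\gamma_p t}(xg)\ast\kappa_0$ and recovers the $(1,3)$ and $(3,1)$ entries; the remaining off-diagonal entries are direct.

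The main obstacle is the rigorous handling of $\kappa_0$ and $\kappa_1$: they are not $L^1$, and $\kappa_1$ is in fact a distribution carrying Dirac masses on the light-cone $|x|=2\Delta t$. Since $e^{-2t\gamma_p\xi^2}$ is a Schwartz function for every $t>0$, each product with a Klein--Gordon factor remains a tempered distribution whose inverse Fourier transform is a $C^\infty$ function of $x$, so all identifications above hold in $\mathcal{S}'(\mathbb{R})$ and the convolutions in (\ref{Greez}) are well defined. Alternatively, one may bypass Fourier analysis altogether by verifying directly that the proposed right-hand side of (\ref{Greez}) satisfies the system (\ref{GPS}) with $\gamma_z=0$ and the initial datum $G(0^+,x)=\delta(x)\,I$.
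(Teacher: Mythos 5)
Your proposal is correct and follows essentially the same route as the paper: identify $g$, $h_\pm$, $\kappa_0$, $\kappa_1$ as inverse Fourier transforms of the Gaussian, the (signed) Lorentzian-damped Gaussian, and the Klein--Gordon propagators $\sin(t\omega)/\omega$ and $\cos(t\omega)$, then decompose each entry of $e^{tQ(\xi)}$ (e.g.\ via $4\Omega^2/\omega^2=1-4\Delta^2\xi^2/\omega^2$ and $i\xi\mapsto\partial_x$) and apply the convolution theorem; the paper carries out the same identifications in its proof and defers the explicit $\operatorname{erfc}$/Bessel evaluations and the distributional treatment of $\kappa_1=\partial_t\kappa_0$ to a subsequent remark, exactly as you do.
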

\begin{proof}
Let $\mathcal{F}$ and $\mathcal{F}^{-1}$ denote the usual Fourier and inverse Fourier transforms, i.e.
$$
\mathcal{F}[f](\xi)=\widehat{f}(\xi)=\int_{\mathbb{R}}f(x)e^{-ix\xi}dx,\quad \mathcal{F}^{-1}[h](x)=\frac{1}{2\pi}\int_{\mathbb{R}}h(\xi)e^{ix\xi}d\xi.
$$
Then we have that the functions in \eqref{functtt} can be written as
$
g(t,x)=\mathcal{F}^{-1}\left[e^{-2\gamma_p t\xi^2}\right](x),
$
$$
h_+(t,x)=\mathcal{F}^{-1}\left[\frac{4\Omega^2}{\omega(\xi)^2}e^{-2\gamma_p t\xi^2}\right](x),\quad  h_-(t,x)=\mathcal{F}^{-1}\left[\frac{4\Omega\Delta\xi}{i\omega(\xi)^2}e^{-2\gamma_p t\xi^2}\right](x),
$$
and
\begin{equation*}\label{kkff}
\kappa_0(t,x)=\mathcal{F}^{-1}\left[\frac{\sin(t\omega(\xi))}{\omega(\xi)}\right](x),\quad \kappa_1(t,x)=\mathcal{F}^{-1}\left[\cos(t\omega(\xi))\right](x).
\end{equation*}
Additionally, we also have
$$
\mathcal{F}^{-1}\left[\frac{\Delta\xi}{i}e^{-2\gamma_p t\xi^2}\right](x)=\frac{\Delta x}{4\gamma_pt}g(t,x),\quad \mathcal{F}^{-1}\left[\frac{4\Delta^2\xi^2}{\omega(\xi)^2}e^{-2\gamma_p t\xi^2}\right](x)=g(t,x)-h_+(t,x).
$$
Applying the convolution theorem, $\mathcal{F}\left[f\ast g\right](\xi)=\widehat{f}(\xi)\widehat{g}(\xi)$ and considering the previous relations to the expression for the matrix exponential in \eqref{etQz} immediately yields the form of the Green's function in \eqref{Greez}.
\end{proof}

\begin{remark}
Let us give a brief justification of the computation of the Fourier inverse transforms of $h_{\pm}(t,x)$ and $\kappa_i(t,x), i=0,1,$ in \eqref{functtt}. For $h_{\pm}(t,x)$, using the standard formula
\(\mathcal{F}^{-1}\!\big[(\xi^2+a^2)^{-1}\big](x)=(2a)^{-1}e^{-a|x|}\) we obtain
$$
\mathcal{F}^{-1}\!\left[\frac{4\Omega^2}{\omega(\xi)^2}\right](x)=\mathcal{F}^{-1}\!\left[\frac{\Omega^2/\Delta^2}{\xi^2+(\Omega/\Delta)^2}\right](x)=\frac{\Omega}{2\Delta}\,e^{-(\Omega/\Delta)|x|}.
$$
By the convolution theorem, the inverse transform is the convolution of the two inverse transforms. Using that $\mathcal{F}^{-1}\!\left[e^{-2\gamma_p t\,\xi^2}\right](x)=g(t,x)$ we have
\[
\mathcal{F}^{-1}\!\left[\frac{4\Omega^2}{\omega(\xi)^2}\,e^{-2\gamma_p t\,\xi^2}\right](x)
=\left(\frac{1}{\sqrt{8\pi\gamma_p t}}e^{-\frac{(\,\cdot\,)^2}{8\gamma_p t}}\right)
* \left(\frac{\Omega}{2\Delta}e^{-(\Omega/\Delta)|\cdot|}\right)(x)
=\frac{\Omega}{2\Delta}\int_{\mathbb{R}}
\frac{e^{-\frac{(x-y)^2}{8\gamma_p t}}}{\sqrt{8\pi \gamma_p t}}\,e^{-(\Omega/\Delta)|y|}\,dy.
\]
Splitting the integral at \(y=0\), completing the square and using the formula
\[
\int_{0}^{\infty}e^{-(\Omega/\Delta)y}\exp\!\left(-\frac{(y-x)^2}{8\gamma_p t}\right)dy
=\sqrt{2\pi\gamma_p t}\;e^{\frac{2\Omega^2\gamma_pt}{\Delta^2}}e^{-(\Omega/\Delta)x}\operatorname{erfc}\!\left(\frac{4\gamma_p t\Omega-\Delta x}{2\Delta\sqrt{2\gamma_p t}}\right),
\]
we get the result. Similarly for $h_-$, where now we use
$$
\mathcal{F}^{-1}\!\left[\frac{4\Omega\Delta\xi}{i\omega(\xi)^2}\right](x)=\frac{\Omega}{2\Delta}\,\sgn(x)\,e^{-(\Omega/\Delta)|x|}.
$$
On the other hand, the function $\kappa_0(t,x)$ is in fact the solution of the 1D Klein-Gordon equation with initial conditions:
\[
\partial_t^2 \kappa_0(t,x)-4\Delta^2\partial_x^2 \kappa_0(t,x)+4\Omega^2 \kappa_0(t,x)=0,
\quad \kappa_0(0,x)=0,\quad \partial_t \kappa_0(0,x)=\delta(x).
\]
Indeed, applying Fourier transform to the previous PDE gives the ODE
\[
\partial_t^2\widehat\kappa_0(t,\xi) + \omega(\xi)^2\,\widehat \kappa_0(t,\xi)=0,\quad \widehat\kappa_0(0,\xi)=0,\quad \partial_t\widehat\kappa_0(0,\xi)=1,
\]
since $\mathcal F[\delta]=1$. The general solution is given by 
\[
\widehat\kappa_0(t,\xi)=A(\xi)\cos(t\omega(\xi))+B(\xi)\sin(t\omega(\xi)),
\]
and the initial conditions give $A(\xi)=0,\; B(\xi)=\dfrac{1}{\omega(\xi)}$. Thus
\[
\displaystyle \widehat\kappa_0(t,\xi)=\frac{\sin\!\big(t\omega(\xi)\big)}{\omega(\xi)}.
\]
Inverting the Fourier transform gives
\begin{equation}\label{k0fou}
\kappa_0(t,x)=\frac{1}{2\pi}\int_{-\infty}^{\infty}\frac{\sin\!\big(\omega(\xi)t\big)}{\omega(\xi)}\,e^{i x\xi}\,d\xi.
\end{equation}
Since the integrand is an even function, and using the classical transform identity (which can be found in \cite[formula 3.876.1]{GrRy} or \cite[p.26, formula (30)]{Erdely})
\begin{equation*}\label{J0form}
\displaystyle
\int_{0}^{\infty} \frac{\sin\!\big(a\sqrt{\xi^2+b^2}\big)}{\sqrt{\xi^2+b^2}}\cos(x\xi)\,d\xi
= \dfrac{\pi}{2}\,J_0\!\big(b\sqrt{a^2-x^2}\big)\chi_{|x|<a},
\end{equation*}
we get $\kappa_0(t,x)$ as written in \eqref{functtt}.

For $\kappa_1(t,x)$ we also have that it is the solution of the 1D Klein-Gordon equation with different initial conditions:
\[
\partial_t^2 \kappa_1(t,x)-4\Delta^2\partial_x^2 \kappa_1(t,x)+4\Omega^2 \kappa_1(t,x)=0,
\quad \kappa_1(0,x)=\delta(x),\quad \partial_t \kappa_1(0,x)=0.
\]
Applying Fourier transform now we get
\[
\displaystyle \widehat\kappa_1(t,\xi)=\cos\!\big(t\omega(\xi)\big)\; \Longrightarrow\; \kappa_1(t,x)=\frac{1}{2\pi}\int_{-\infty}^{\infty} e^{i x\xi}\cos\!\big(t\omega(\xi)\big)d\xi=\frac{1}{\pi}\int_{0}^{\infty} \cos\!\big(t\omega(\xi)\big)\cos(i x\xi)d\xi.
\]
We observe from \eqref{k0fou} that, at least formally, 
\[
\kappa_1(t,x) = \partial_t \kappa_0(t,x).
\]
Differentiating $\kappa_0(t,x)$ in \eqref{functtt} with respect to $t$ 
in the sense of distributions requires care. 
One must distinguish between the derivative of the smooth factor valid in the interior region $|x|<2\Delta t$, and the contribution arising from the derivative of the indicator function at the boundary $|x|=2\Delta t$, which can be expressed using the Heaviside function as $H(2\Delta t - |x|)$.

Using \(J_0'(z)=-J_1(z)\) and
\[
\frac{d}{dt}\left[\sqrt{t^{2}-\frac{x^{2}}{4\Delta^2}}\right]=\frac{t}{\sqrt{t^{2}-\frac{x^{2}}{4\Delta^2}}}\quad (|x|<2\Delta t),\quad
\frac{d}{dt}H(2\Delta t-|x|)=2\Delta\,\delta(2\Delta t-|x|),
\]
we obtain (distributionally)
$$
\partial_t \kappa_0(t,x)= \frac{1}{4\Delta}\left[ -J_1\left(2\Omega \sqrt{t^{2}-\frac{x^{2}}{4\Delta^2}}\right)\cdot\frac{2\Omega t}{\sqrt{t^{2}-\frac{x^{2}}{4\Delta^2}}}\;H(2\Delta t-|x|) \;+\; J_0(0)\,2\Delta\,\delta(2\Delta t-|x|)\right].
$$
Since $J(0)=1$, we obtain $\kappa_1(t,x)$ in \eqref{functtt}.

\end{remark}

\begin{remark}
We have expressed the matrix-valued Green's function \eqref{Greez} 
as a combination and convolution of the functions in \eqref{functtt}. 
Each entry admits an integral representation. 
For example, the $(2,3)$ entry can be written as
\[
\Omega \bigl(g \ast \kappa_0\bigr)(t,x)
= \frac{\Omega}{4\Delta\sqrt{2\pi\gamma_p t}}
\int_{-2\Delta t}^{2\Delta t}
\exp\!\left(-\frac{(x-y)^2}{8\gamma_p t}\right)
J_0\!\left(\frac{\Omega}{\Delta}
\sqrt{4\Delta^2t^2-y^2}\right)\,dy.
\]
Alternatively, the same entry admits a representation in terms of a cosine transform, since all functions involved are even:
\[
\frac{1}{\pi}
\int_{0}^{\infty}
\exp\!\bigl(-2\gamma_p t \xi^2\bigr)\,
\frac{\Omega \sin\bigl(t\omega(\xi)\bigr)}{\omega(\xi)}
\cos(x\xi)\,d\xi.
\]
These two integral representations are connected by the convolution theorem. In principle, convolutions involving the functions $\kappa_0$ and $\kappa_1$ offer an advantage, as they lead to integrals over bounded intervals, whereas the cosine transform requires evaluating an improper integral. Depending on the initial conditions of the problem, it may be preferable to use either the convolution form or the cosine transform form.
\end{remark}

Let us study now one specific example with the following initial conditions:
\begin{equation}\label{iccslapz}
\rho(0,x)=f_L(x)\begin{pmatrix}p&\sqrt{p(1-p)}(r+iq)\\
\sqrt{p(1-p)}(r-iq)&1-p
\end{pmatrix},\; 0<p<1,\; r,q\in\mathbb{R},\; r^2+q^2\leq1,
\end{equation}
where 
$$
f_L(x)=\frac{\Omega}{2\Delta}e^{-\frac{\Omega}{\Delta}|x|},\quad x\in\mathbb{R},
$$
is the Laplace distribution with scale parameter $\Delta/\Omega$. This choice considerably simplifies computations. Although the example can also be computed using a different free scale parameter $a$, the resulting computations are essentially analogous. Since we are dealing with the full matrix-valued function $G(t,x)$ in \eqref{Greez}, we may consider an initial density matrix like \eqref{iccslapz} whose components are not restricted to the main diagonal.

With these initial conditions we have $\vec{u}_0(x)=\frac{\Omega}{2\Delta}e^{-\frac{\Omega}{\Delta}|x|}(1,q\sqrt{p(1-p)},2p-1)^T$. Let $\vec{u}(t,x)=(u_1(t,x),u_2(t,x),u_3(t,x))^T$ be the solution of \eqref{GPS}. Using the Green's function \eqref{Greez}, we obtain
\smallskip
\begin{equation}\label{uz1}
\begin{split}
u_1(t,x)&=h_+\ast f_L+g \ast \kappa_1 \ast f_L-h_+ \ast \kappa_1 \ast f_L\\
&\quad-2q\sqrt{p(1-p)}\left(h_-\ast f_L-h_-\ast\kappa_1\ast f_L\right)+(2p-1)\frac{\Delta}{2\gamma_pt}(xg\ast\kappa_0\ast f_L),\\
u_2(t,x)&=\frac{1}{2}h_-\ast f_L-\frac{1}{2}h_-\ast\kappa_1\ast f_L\\
&\quad+q\sqrt{p(1-p)}\left(g\ast f_L-h_+\ast f_L+h_+\ast\kappa_1\ast f_L\right)
+(2p-1)\Omega\left(g\ast\kappa_0\ast f_L\right),\\
u_3(t,x)&=\frac{\Delta}{2\gamma_pt}\left(xg\ast\kappa_0\ast f_L\right)-4\Omega q\sqrt{p(1-p)}(g\ast\kappa_1\ast f_L)+(2p-1)\left(g\ast\kappa_1\ast f_L\right),
\end{split}
\end{equation}
where $\ast$ denotes the usual convolution operator. These expressions can be further simplified by exploiting the associativity of the convolution operation, together with the following identities:
$$
f_L\ast f_L=\frac{1}{2\Delta}(\Omega|x|+\Delta) f_L,\quad f_L\ast (\sgn(x)f_L)=\frac{x\Omega}{2\Delta}f_L,
$$
$$
g\ast f_L=h_+,\quad g\ast(\sgn(x)f_L)=h_-,\quad xg\ast f_L=\frac{4\gamma_pt\Omega}{\Delta}h_-,
$$
\begin{align*}
\phi_+:=h_+\ast f_L=&
-\frac{\Omega}{8\,\Delta^{3}}
\exp\!\Bigl(\frac{2\Omega^{2} \gamma_p t}{\Delta^{2}}\Bigr)\,
\!\left[
e^{-\Omega x/\Delta}\,(4\Omega^{2} \gamma_p t-\Delta^{2}-\Omega\Delta x)\,
\operatorname{erfc}\!\left(\frac{4\gamma_p\Omega t - \Delta x}{2\Delta\sqrt{2\gamma_p t}}\right)\right.\\
&
\quad+
\left.e^{\Omega x/\Delta}\,(4\Omega^{2} \gamma_p t-\Delta^{2}+\Omega\Delta x)\,
\operatorname{erfc}\!\left(\frac{4\gamma_p\Omega t + \Delta x}{2\Delta\sqrt{2\gamma_p t}}\right)
\right]+\frac{\Omega^{2}}{\Delta^{2}}\sqrt{\frac{\gamma_pt}{2\pi}}\,e^{-x^{2}/(8 \gamma_p t)}.
\end{align*}

\begin{align*}
\phi_-:=h_-\ast f_L=&
\frac{\Omega^{2}}{8\,\Delta^{3}}\,
\exp\!\Bigl(\frac{2\Omega^{2} \gamma_p t}{\Delta^{2}}\Bigr)\,
\left[
e^{\frac{\Omega x}{\Delta}}\,(4\Omega \gamma_p t+\Delta x)\,
\operatorname{erfc}\!\left(\frac{4\gamma_p\Omega t + \Delta x}{2\Delta\sqrt{2\gamma_p t}}\right)\right.\\
&\qquad-\left.
e^{-\frac{\Omega x}{\Delta}}\,(4\Omega \gamma_p t-\Delta x)\,
\operatorname{erfc}\!\left(\frac{4\gamma_p\Omega t - \Delta x}{2\Delta\sqrt{2\gamma_p t}}\right)
\right].
\end{align*}
With this simplification we may write \eqref{uz1} as
\begin{equation}\label{uz2}
\begin{split}
u_1(t,x)&=\phi_++(h_+-\phi_+)\ast\kappa_1-2q\sqrt{p(1-p)}\,(\phi_--\phi_-\ast\kappa_1)+2\Omega(2p-1)h_-\ast\kappa_0,\\
u_2(t,x)&=\frac{1}{2}\left(\phi_--\phi_-\ast\kappa_1\right)+q\sqrt{p(1-p)}\,(h_+-\phi_++\phi_+\ast\kappa_1)+\Omega(2p-1)h_+\ast\kappa_0,\\
u_3(t,x)&=2\Omega h_-\ast\kappa_0-4\Omega q\sqrt{p(1-p)}\,h_+\ast\kappa_0+(2p-1)h_+\ast\kappa_1.
\end{split}
\end{equation}
\begin{remark}
Observe that in the above simplification we avoided the convolution of $f_L$ with the Bessel-type functions $\kappa_0$ and $\kappa_1$.  
However, if we assume that $x > 2\Delta t$, the convolutions $f_L \ast \kappa_0$ and $f_L \ast \kappa_1$ admit a further simplification. Indeed, using \cite[formula 3.876.4]{GrRy}, we obtain
$$
f_L\ast\kappa_0=\mathcal{F}^{-1}\left[\widehat{f_L}\widehat{\kappa_0}\right]=\frac{\Omega}{2}\mathcal{F}^{-1}\left[\frac{\sin(2t\sqrt{\Delta^2\xi^2+\Omega^2})}{(\Delta^2\xi^2+\Omega^2)^{3/2}}\right]=tf_L,\quad\mbox{for}\;\; x>2t\Delta,
$$
and using \cite[formula 3.876.5]{GrRy}, we obtain
$$
f_L\ast\kappa_1=\mathcal{F}^{-1}\left[\widehat{f_L}\widehat{\kappa_1}\right]=\Omega^2\mathcal{F}^{-1}\left[\frac{\cos(2t\sqrt{\Delta^2\xi^2+\Omega^2})}{\Delta^2\xi^2+\Omega^2}\right]=f_L,\quad\mbox{for}\;\; x>2t\Delta.
$$
Therefore, for $x>2t\Delta$, we obtain an explicit expression of the solutions \eqref{uz2}, given in terms only of the functions $h_{\pm}$ (see \eqref{functtt}):
\begin{equation*}
\begin{split}
u_1(t,x)&=h_++2t\Omega(2p-1)h_-,\\
u_2(t,x)&=\left[q\sqrt{p(1-p)}+t\Omega(2p-1)\right]h_+,\\
u_3(t,x)&=2t\Omega h_-+\left[-4\Omega q\sqrt{p(1-p)}+(2p-1)\right]h_+.
\end{split}
\end{equation*}
The formulas contained in \cite{GrRy} are only valid for $x>2t\Delta$. For $x\leq2t\Delta$ we have not been able to find an explicit expression of the convolution functions $f_L\ast\kappa_0$ and $f_L\ast\kappa_1$.
\end{remark}

In this case, the probability density $P(t,x)$ coincides with the function $u_1(t,x)$ in \eqref{uz2}. Let us now write this expression explicitly in order to obtain a more simplified form. From \eqref{functtt} we obtain
\begin{align*}
P(t,x)&=\phi_+(t,x)+\frac{1}{2}\left(h_+(t,x-2t\Delta)-\phi_+(t,x-2t\Delta)+h_+(t,x+2t\Delta)-\phi_+(t,x+2t\Delta)\right)\\
&\qquad-t\Omega\int_{-2t\Delta}^{2t\Delta}\frac{h_+(t,x-y)-\phi_+(t,x-y)}{\sqrt{4t^2\Delta^2-y^2}}J_1\left(\frac{\Omega}{\Delta}\sqrt{4t^2\Delta^2-y^2}\right)dy\\
&\quad\qquad-2q\sqrt{p(1-p)}\left[\phi_-(t,x)+\frac{1}{2}\left(\phi_-(t,x-2t\Delta)+\phi_-(t,x+2t\Delta)\right)\right.\\
&\qquad\qquad\left.+t\Omega\int_{-2t\Delta}^{2t\Delta}\frac{\phi_-(t,x-y)}{\sqrt{4t^2\Delta^2-y^2}}J_1\left(\frac{\Omega}{\Delta}\sqrt{4t^2\Delta^2-y^2}\right)dy\right]\\
&\quad\qquad\qquad+(2p-1)\frac{\Omega}{2\Delta}\int_{-2t\Delta}^{2t\Delta}h_-(t,x-y)J_0\left(\frac{\Omega}{\Delta}\sqrt{4t^2\Delta^2-y^2}\right)dy.
\end{align*}
After the change of variables $y=2t\Delta\cos(\theta)$ in the integrals we obtain a further simplification:
\begin{equation}\label{pdfF}
\begin{split}
P(t,x)&=\phi_+(t,x)+\frac{1}{2}\left(h_+(t,x-2t\Delta)-\phi_+(t,x-2t\Delta)+h_+(t,x+2t\Delta)-\phi_+(t,x+2t\Delta)\right)\\
&\quad-t\Omega\int_{0}^{\pi}\left[h_+(t,x-2t\Delta\cos(\theta))-\phi_+(t,x-2t\Delta\cos(\theta))\right]J_1\left(2t\Omega\sin(\theta)\right)d\theta\\
&\quad\qquad-2q\sqrt{p(1-p)}\left[\phi_-(t,x)+\frac{1}{2}\left(\phi_-(t,x-2t\Delta)+\phi_-(t,x+2t\Delta)\right)\right.\\
&\qquad\qquad\left.+t\Omega\int_{0}^{\pi}\phi_-(t,x-2t\Delta\cos(\theta))J_1\left(2t\Omega\sin(\theta)\right)d\theta\right]\\
&\qquad\qquad\qquad+(2p-1)t\Omega\int_{0}^{\pi}h_-(t,x-2t\Delta\cos(\theta))\sin(\theta)J_0\left(2t\Omega\sin(\theta)\right)d\theta.
\end{split}
\end{equation}
With \eqref{pdfF} we have enough information to plot the probability density function. Figure \ref{fig4} shows the probability distribution of the open quantum Brownian particle for different values of the parameters and time evolution.
\begin{figure}[!ht]
    \centering
    \begin{subfigure}[b]{0.5\textwidth}
        \centering
        \includegraphics[height=3.50in]{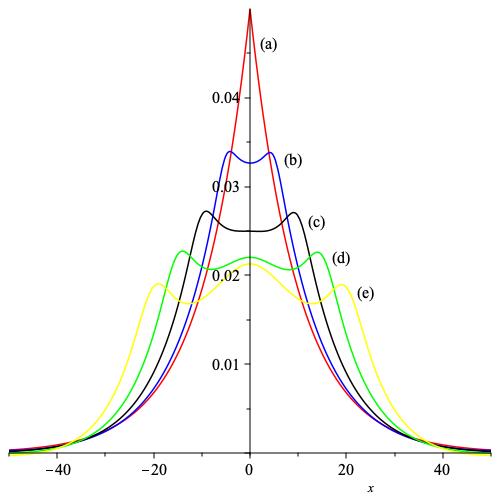}
    \end{subfigure}%
    ~ 
    \begin{subfigure}[b]{0.5\textwidth}
        \centering
        \includegraphics[height=3.50in]{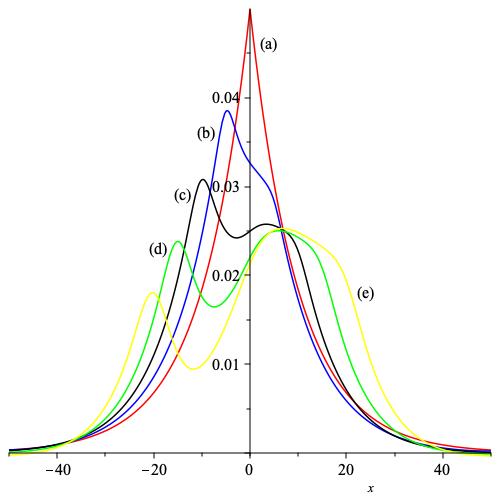}
           \end{subfigure}
  \caption{The probability distribution \eqref{pdfF} of the OQBM for different moments of time. The initial distribution is given by \eqref{iccslapz}. The curves from (a) to (e) corresponds to times $0, 25, 50, 75, 100$, respectively. The graph on the left is for $q=0$, while the graph on the right is for $q=-1/2$. The other parameters are chosen to be $p=1/4,\gamma_p=10^{-2},\Delta=10^{-1}, \Omega=10^{-2}$.}
\label{fig4}
\end{figure}
We observe, unlike the case when $\Omega=0$, that the probability density approaches a superposition of \emph{three} Gaussian distributions as $t\to\infty$.

As for the population imbalance $Q(t,x)=\rho_{11}(t,x)-\rho_{22}(t,x)$, this is given by the function $u_3(t,x)$ in \eqref{uz2}, which in this case it is given, after the same simplifications as before, by:
\begin{equation}\label{popimbz}
\begin{split}
Q(t,x)&=\frac{2p-1}{2}\left(h_+(t,x-2t\Delta)+h_+(t,x+2t\Delta)\right)\\
&\quad+t\Omega\int_{0}^{\pi}h_-(t,x-2t\Delta\cos(\theta))\sin(\theta)J_0\left(2t\Omega\sin(\theta)\right)d\theta\\
&\qquad-2tq\Omega\sqrt{p(1-p)}\int_0^\pi h_+(t,x-2t\Delta\cos(\theta))\sin(\theta)J_0\left(2t\Omega\sin(\theta)\right)d\theta\\
&\quad \qquad-(2p-1)t\Omega\int_{0}^{\pi}h_+(t,x-2t\Delta\cos(\theta))J_1\left(2t\Omega\sin(\theta)\right)d\theta.
\end{split}
\end{equation}
In Figure \ref{fig5} we have plotted the population imbalance of the open quantum Brownian particle for different values of the parameters and time evolution.
\begin{figure}[!ht]
    \centering
    \begin{subfigure}[b]{0.5\textwidth}
        \centering
        \includegraphics[height=3.50in]{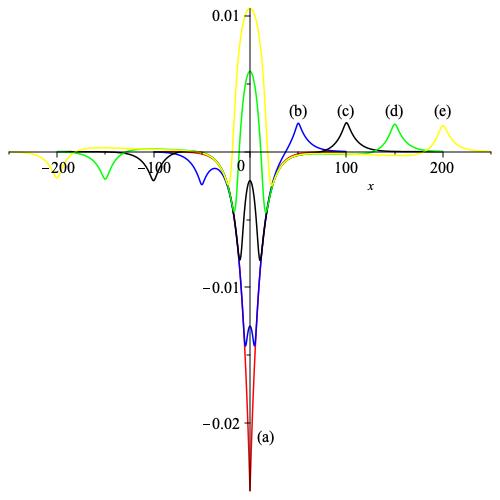}
    \end{subfigure}%
    ~ 
    \begin{subfigure}[b]{0.5\textwidth}
        \centering
        \includegraphics[height=3.50in]{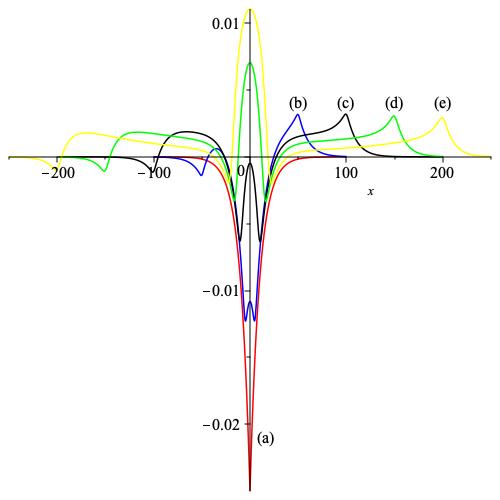}
           \end{subfigure}
  \caption{The population imbalance \eqref{popimbz} of the OQBM for different moments of time. The initial distribution is given by \eqref{iccslapz}. The curves from (a) to (e) corresponds to times $0, 25, 50, 75, 100$, respectively. The graph on the left is for $q=0$, while the graph on the right is for $q=-1/2$. The other parameters are chosen to be $p=1/4,\gamma_p=10^{-2},\Delta=10^{-1}, \Omega=10^{-2}$.}
\label{fig5}
\end{figure}

\section{Appendix: The Born-Markov approximation}

For completeness, we recall the Born-Markov approximation, see e.g. \cite{car}. Begin with a Hamiltonian in the general form
$$H=H_S+H_R+H_{SR},$$
where $H_S$ and $H_R$ are Hamiltonians for system $S$ and reservoir $R$, respectively, and $H_{SR}$ is an interaction Hamiltonian. We would like to find information on system $S$ without requiring detailed information on the composite system $S\otimes R$. If $\chi(t)$ is the density operator for $S\otimes R$, define the reduced density 
$$\rho(t)=\mathrm{Tr}_R(\chi(t))$$
We would like to obtain an equation for $\rho(t)$ with the properties of $R$ entering only as parameters. In order to do this we proceed as follows: starting from Schr\"odinger's equation,
$$\dot{\chi}(t)=\frac{1}{i\hbar}[H,\chi],$$
write
$$\tilde{\chi}(t)=\exp{\Big[\frac{i}{\hbar}(H_S+H_R)t\Big]}\chi(t)\exp{\Big[-\frac{i}{\hbar}(H_S+H_R)t\Big]},$$
$$\tilde{H}_{SR}(t)=\exp{\Big[\frac{i}{\hbar}(H_S+H_R)t\Big]}H_{SR}\exp{\Big[-\frac{i}{\hbar}(H_S+H_R)t\Big]},$$
so that
$$\dot{\tilde{\chi}}(t)=\frac{1}{i\hbar}[\tilde{H}_{SR},\tilde{\chi}].$$
Integrating and substituting for $\tilde{\chi}(t)$ in the commutator gives
\begin{equation*}
\dot{\tilde{\chi}}=\frac{1}{i\hbar}[\tilde{H}_{SR}(t),\chi(0)]-\frac{1}{\hbar^2}\int_0^t [\tilde{H}_{SR}(t),[\tilde{H}_{SR}(t'),\tilde{\chi}(t')]]\;dt',
\end{equation*}
which is Schr\"odinger's equation in integro-differential form. This form allows us to identify reasonable approximations.

\medskip

Now we assume that the interaction begins at time $t=0$ and that no correlations exist between $S$ and $R$ at this time. Therefore, we can write $\chi(0)=\rho(0)R_0$, where $R_0$ is an initial reservoir density operator. At later times correlations between $S$ and $R$ will arise, but we assume that such coupling is very weak. If we write
$$\tilde{\chi}(t)=\tilde{\rho}(t)R_0+O(H_{SR}),$$
the {\bf Born approximation} consists of neglecting terms higher than second order in $H_{SR}$, so we can write
\begin{equation*}
\dot{\tilde{\chi}}=-\frac{1}{\hbar^2}\int_0^t [\tilde{H}_{SR}(t),[\tilde{H}_{SR}(t'),\tilde{\rho}(t')R_0]]\;dt'.
\end{equation*}
Finally, by replacing $\tilde{\rho}(t')$ with $\tilde{\rho}(t)$ we obtain the {\bf Born-Markov approximation}:
\begin{equation*}
\dot{\tilde{\chi}}=-\frac{1}{\hbar^2}\int_0^t [\tilde{H}_{SR}(t),[\tilde{H}_{SR}(t'),\tilde{\rho}(t)R_0]]\;dt'.
\end{equation*}

\end{document}